\newcommand{\diff}{{\mathrm{diff}}} 
\newcommand{\wt}{{\mathrm{wt}}} 
\newcommand{\image}{{\mathrm{Im}}}
\newcommand{\Z}{\mathbb{{Z}}}
\newcommand{\tr}{{\mathrm{Tr}}}
\newcommand{\gf}{{\mathrm{GF}}}
\newcommand{\PG}{{\mathrm{PG}}}
\newcommand{\C}{{\mathcal{C}}}
\newcommand{\rC}{{\mathrm{c}}} 
\newcommand{\cP}{{\mathcal{P}}}
\newcommand{\cB}{{\mathcal{B}}} 
\newcommand{\cI}{{\mathcal{I}}} 
\newcommand{\cD}{{\mathcal{D}}} 
\newcommand{\bD}{{\mathcal{D}}} 
\newcommand{\bc}{{\mathbf{c}}}
\newtheorem{theorem}{Theorem}
\newtheorem{lemma}[theorem]{Lemma}
\newtheorem{corollary}[theorem]{Corollary}
\newtheorem{conj}{Conjecture}
\newtheorem{example}{Example}
\begin{document}

\title{Linear Codes from Some $2$-Designs\thanks{The research of C. Ding was supported by The Research Grants Council of Hong Kong, under Grant No. 16301114.}}

\author{Cunsheng Ding\thanks{C. Ding is with the Department of Computer Science and Engineering, 
The Hong Kong University of Science and Technology, Clear Water Bay, Kowloon, Hong Kong. Email: cding@ust.hk}}

\date{\today}
\maketitle

\begin{abstract}
A classical method of constructing a linear code over $\gf(q)$ with a $t$-design is to use the incidence 
matrix of the $t$-design as a generator matrix over $\gf(q)$ of the code. This approach has been extensively investigated in the 
literature. In this paper, a different method of constructing linear codes using specific classes of 
$2$-designs is studied, and linear codes with a few weights are obtained from almost difference sets, 
difference sets, and a type of $2$-designs associated to semibent functions. Two families of the codes 
obtained in this paper are optimal. 
The linear codes presented in this paper have applications in secret sharing and authentication schemes, 
in addition to their applications in consumer electronics, communication and data storage systems.   
 A coding-theory approach to the characterisation of highly nonlinear 
Boolean functions is presented.       
\end{abstract}

\begin{keywords}
Almost bent functions, almost difference sets, bent functions, difference sets, linear codes, semibent functions, $t$-designs. 
\end{keywords}

\section{Introduction}\label{sec-intro} 

Throughout this paper, let $p$ be an odd prime and let $q=p^m$ for some positive integer $m$. 
An $[n,\,\kappa,\,d]$ code $\C$ over $\gf(p)$ is a $\kappa$-dimensional subspace of $\gf(p)^n$ with minimum 
(Hamming) distance $d$. 
Let $A_i$ denote the number of codewords with Hamming weight $i$ in a code
$\C$ of length $n$. The {\em weight enumerator} of $\C$ is defined by
$
1+A_1z+A_2z^2+ \cdots + A_nz^n.
$ 
A code $\C$ is said to be a $t$-weight code  if the number of nonzero
$A_i$ in the sequence $(A_1, A_2, \cdots, A_n)$ is equal to $t$.

A \emph{finite incidence structure}, denoted by $(\cP, \,\cB, \,\cI)$, consists of two disjoint finite sets 
$\cP$ and $\cB$, and a subset $\cI$ of $\cP \times \cB$. The members of $\cP$ are called \emph{points} 
and are normally denoted by lower-case Roman letters; the members of $\cB$ are referred to as \emph{blocks} 
and are normally denoted by upper-case Roman letters. If the ordered pair $(p, \,B)$ is in $\cP \times \cB$, we 
say that $p$ is \emph{incident} with $B$, or $B$ contains the point $p$, or that $p$ is on $B$. 

An incidence structure $\cD=(\cP, \,\cB, \,\cI)$ is called a \emph{$t$-$(n, \,k, \,\lambda)$ design}, or simply a 
\emph{$t$-design}, where $t$, $n$, $k$ and $\lambda$ are nonnegative integers, if 
\begin{enumerate}
\item $|\cP|=n$; 
\item every block $B \in \cB$ is incident with precisely $k$ points; 
\item every $t$ distinct points are together incident with precisely $\lambda$ blocks. 
\end{enumerate}  

Let $\cD = (\cP, \,\cB, \,\cI)$ be an incidence structure with $n \ge 1$ points and $b \ge 1$ blocks. 
The points of $\cP$ are usually indexed with $p_1,p_2,\cdots,p_n$, and the blocks of $\cB$ 
are normally denoted by $B_1, B_2, \cdots, B_b$. The {\em incidence matrix\index{incidence 
matrix}} $M_\cD=(m_{ij})$ of $\cD$ is a $b \times n$ matrix where $m_{ij}=1$ if  $p_j$ is on $B_i$ 
and $m_{ij}=0$ otherwise. It is clear that the incidence matrix $M_\cD$ depends on the labeling of 
the points and blocks of $\cD$, but is unique up to row and column permutations. Conversely, 
every ($0, \, 1$)-matrix (entries are $0$ or $1$) determines an incidence structure. Our definition 
of the incidence matrix follows reference \cite{AssmusKey92}[p.12]. In some other references, the transpose of the 
matrix $M_\cD$ above is defined as the incidence matrix.  When $M_\cD$ is viewed as a matrix 
over $\gf(p)$, it spans a linear code of length $n$ over $\gf(p)$, denoted by $\C_p(\cD)$ and  
called the linear code of the incidence structure. With this framework of construction, every 
$t$-design yields a linear code over $\gf(p)$. This approach to the construction of linear codes 
with $t$-designs has been extensively studied.   

The objective of this paper is to study a different method of constructing linear codes using certain 
special types of $2$-designs. In this paper, one-weight, two-weight and three-weight linear 
codes are obtained from almost difference sets, difference sets and a type of 2-designs associated 
with semibent functions. Two families of the linear codes presented in this paper are optimal. 
The linear codes with a few weights presented in this paper have applications in secret sharing 
\cite{ADHK} and authentication codes \cite{CX05}, 
in addition to their applications in consumer electronics, communication and data storage systems.    
 A coding-theory approach to the characterisation of highly nonlinear 
Boolean functions is presented in this paper. 

\section{Mathematical foundations} 

\subsection{Almost difference sets and difference sets} 

For convenience later, we define the \emph{difference function\index{difference function}} of a subset 
$D$ of $(A,\,+)$ as 
\begin{eqnarray}\label{eqn-DifferenceFunction}
\diff_D(x)=|D \cap (D+x)|, 
\end{eqnarray}
where $D+x=\{y+x: y \in D\}$. 

A subset $D$ of size $k$ in an abelian group $(A, \, +)$ with order $v$ is called 
a $(v, \,k, \,\lambda)$ \emph{difference set\index{difference set}}  in $(A,\,+)$ if the difference function 
$\diff_D(x)=\lambda$ for every nonzero $x \in A$. A difference set $D$ in $(A,\,+)$ is called \emph{cyclic} if 
the abelian group $A$ is so. 


If $D$ is a $(v,\,k, \,\lambda)$ difference set in $(A,\,+)$, its \emph{complement\index{complement of a difference set}}, 
$D^{\rC}=A \setminus D$, is a $(v, \,v-k, \,v-2k+\lambda)$ difference set in  $(A,\,+)$. 


Let $D$ be a $(v, \,k, \,\lambda)$ difference set in an abelian group $(A, \,+)$. We associate $D$ with 
an incidence structure $\bD$, called the \emph{development\index{development of difference sets}} 
of $D$, by defining $\bD=(\cP, \,\cB, \,\cI)$, where $\cP$ is the set of the elements in $A$, 
$$ 
\cB=\{a+D: a \in A\}, 
$$
and the incidence $\cI$ is the membership of sets. Each block $a+D=\{a+x: x \in D\}$ is called a  
\emph{translate\index{translate}} of $D$. The development $\bD$ of a difference set $D$ is called 
a \emph{difference set design\index{difference set design}} 
and also the \emph{translate design\index{translate design}} of $D$.

Let $\bD$ be the development of a $(v, \,k, \,\lambda)$ difference set $D$ in a group $A$. Then $\bD$ 
is a $2$-$(v, \, k, \, \lambda)$ design \cite[Theorem 4.4.1]{AssmusKey92}. Every difference set design 
defines a linear code $\C_p(\bD)$ automatically, which was introduced in Section \ref{sec-intro}.

Let $(A, \,+)$ be an abelian group of order $v$. A $k$-subset $D$ of  $A$ is a $(v, \,k, \,\lambda, \,t)$ 
\emph{almost difference set\index{almost difference set}} of $A$ if the difference function $\diff_D(x)$ 
takes on $\lambda$ altogether $t$ times and  $\lambda +1$ altogether $v-1-t$ times when $x$ ranges 
over all the nonzero elements of $A$. In the sequel, we will employ some almost difference sets to construct linear codes with only a few weights. 

\subsection{Group characters in $\gf(q)$}

An {\em additive character} of $\gf(q)$ is a nonzero function $\chi$ 
from $\gf(q)$ to the set of nonzero complex numbers such that 
$\chi(x+y)=\chi(x) \chi(y)$ for any pair $(x, y) \in \gf(q)^2$. 
For each $b\in \gf(q)$, the function
\begin{eqnarray}\label{dfn-add}
\chi_b(c)=\epsilon_p^{\tr(bc)} \ \ \mbox{ for all }
c\in\gf(q) 
\end{eqnarray}
defines an additive character of $\gf(q)$, where and whereafter $\epsilon_p=e^{2\pi \sqrt{-1}/p}$ is 
a primitive complex $p$th root of unity. When $b=0$,
$\chi_0(c)=1 \mbox{ for all } c\in\gf(q), 
$ 
and is called the {\em trivial additive character} of
$\gf(q)$. The character $\chi_1$ in (\ref{dfn-add}) is called the
{\em canonical additive character} of $\gf(q)$. 
It is known that every additive character of $\gf(q)$ can be 
written as $\chi_b(x)=\chi_1(bx)$ \cite[Theorem 5.7]{LN}.

\section{A generic construction of linear codes}

\subsection{The description of the linear codes}

Let $D=\{d_1, \,d_2, \,\ldots, \,d_n\} \subseteq \gf(q)$, where again $q=p^m$.
Let $\tr$ denote the trace function from $\gf(q)$ onto $\gf(p)$ throughout 
this paper. We define a linear code of 
length $n$ over $\gf(p)$ by 
\begin{eqnarray}\label{eqn-maincode} 
\C_{D}=\{(\tr(xd_1), \tr(xd_2), \ldots, \tr(xd_n)): x \in \gf(q)\},   
\end{eqnarray}  
and call $D$ the \emph{defining set} of this code $\C_{D}$. By definition, the 
dimension of the code $\C_D$ is at most $m$.

This construction is generic in the sense that many classes of known codes 
could be produced by selecting the defining set $D \subseteq \gf(q)$. This 
construction technique was employed in \cite{DN07} and \cite{DLN} for 
obtaining linear codes with a few weights. The objective of this paper is 
to construct linear codes $\C_D$ using almost difference sets $D$ and 
difference sets $D$ as well as a few classes of $2$-designs defined later. If the 
set $D$ is well chosen, the code $\C_D$ may have good or optimal parameters. 
Otherwise, the code $\C_D$ could have bad parameters.

\subsection{The weights in the linear codes $\C_D$}

It is convenient to define for each $x \in \gf(q)$, 
\begin{eqnarray}\label{eqn-mcodeword}
\bc_{x}=(\tr(xd_1), \,\tr(xd_2), \,\ldots, \,\tr(xd_n)).  
\end{eqnarray} 
The Hamming weight $\wt(\bc_x)$ of $\bc_x$ is $n-N_x(0)$, where  
$$ 
N_x(0)=\left|\{1 \le i \le n: \tr(xd_i)=0\}\right| 
$$ 
for each $x \in \gf(q)$. 

It is easily seen that for any $D=\{d_1,\,d_2,\,\ldots, \,d_n\} \subseteq \gf(q)$
we have 
\begin{eqnarray}\label{eqn-hn3}  
pN_x(0) 
= \sum_{i=1}^n \sum_{y \in \gf(p)} e^{2\pi \sqrt{-1} y\tr(xd_i)/p} \nonumber 
= \sum_{i=1}^n \sum_{y \in \gf(p)} \chi_1(yxd_i) \nonumber 
= n + \sum_{y \in \gf(p)^*} \chi_1(yxD) 
\end{eqnarray} 
where $\chi_1$ is the canonical additive character of $\gf(q)$, $aD$ denotes the set 
$\{ad: d \in D\}$, and $\chi_1(S):=\sum_{x \in S} \chi_1(x)$ for any subset $S$ of $\gf(q)$.  
Hence, 
\begin{eqnarray}\label{eqn-weight}
\wt(\bc_x)=n-N_x(0)=\frac{(p-1)n-\sum_{y \in \gf(p)^*} \chi_1(yxD)}{p}. 
\end{eqnarray}

\section{Linear codes from skew sets} 

A subset $D$ of $\gf(q)^*$ is called a skew set of $\gf(q)$ if $D$, $-D$ and $\{0\}$ form a partition of 
$\gf(q)$. 

\begin{theorem}\label{thm-part2}
Let $D$ be any skew set of $\gf(q)$. 
Then $\C_D$ is a one-weight code over $\gf(p)$ with parameters 
$[(q-1)/2, \,m, \,(p-1)q/2p]$. 
\end{theorem}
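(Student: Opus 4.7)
\medskip

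\noindent\textbf{Proof proposal.} My plan is to plug the skew-set structure directly into the character-sum expression \eqref{eqn-weight} and show that the right-hand side is independent of $x \in \gf(q)^*$. The key ingredient is that the defining property of a skew set forces a simple identity for the character sum $\chi_1(aD)$ under negation of $a$.

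First I would record the elementary data: since $D$, $-D$ and $\{0\}$ partition $\gf(q)$ and since $p$ is odd (so $D \cap (-D)=\emptyset$), the length is $n=|D|=(q-1)/2$. Next, for $a \in \gf(q)^*$ set $S(a):=\chi_1(aD)=\sum_{d\in D}\chi_1(ad)$. Since $a\ne 0$, multiplication by $a$ permutes $\gf(q)^*$, so $aD$ and $-aD = a(-D)$ partition $\gf(q)^*$; hence
\begin{equation*}
S(a)+S(-a)=\chi_1(aD)+\chi_1(-aD)=\chi_1(\gf(q)^*)=-1,
\end{equation*}
using the standard fact that the sum of $\chi_1$ over all of $\gf(q)$ is zero.

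With this identity in hand, the inner sum in \eqref{eqn-weight} is easy to evaluate. Fix $x\in\gf(q)^*$ and pair the elements of $\gf(p)^*$ as $\{y,-y\}$; since $p$ is odd there are $(p-1)/2$ such pairs, and for each pair
\begin{equation*}
S(yx)+S(-yx)=-1.
\end{equation*}
Summing over the pairs gives $\sum_{y\in\gf(p)^*}\chi_1(yxD)=-(p-1)/2$. Substituting into \eqref{eqn-weight} yields
\begin{equation*}
\wt(\bc_x)=\frac{(p-1)\cdot(q-1)/2+(p-1)/2}{p}=\frac{(p-1)q}{2p}
\end{equation*}
for every $x\in\gf(q)^*$, which simultaneously shows that $\C_D$ is a one-weight code, that its minimum distance is $(p-1)q/(2p)$, and that the encoding map $x\mapsto\bc_x$ is injective (so the dimension is the maximum possible value $m$, and the length is $(q-1)/2$ as required).

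I do not anticipate a real obstacle here: once one notices the pairing trick $\{y,-y\}\subset\gf(p)^*$ combined with the skew-set identity $S(a)+S(-a)=-1$, the cancellation is automatic and the weight becomes manifestly independent of $x$. The only mild care needed is to keep track of where the oddness of $p$ is used (to guarantee $D\cap(-D)=\emptyset$ and to pair $\gf(p)^*$ without fixed points) and to note that the nonvanishing of the single weight is what promotes the a priori bound $\dim\C_D\le m$ to an equality.
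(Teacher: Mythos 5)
Your proposal is correct and follows essentially the same route as the paper: the paper's proof also pairs $y$ with $-y$ over $\gf(p)^*$ and uses the fact that $yxD$ and $-yxD$ partition $\gf(q)^*$, so that $\chi_1(yxD)+\chi_1(-yxD)=\chi_1(\gf(q)^*)=-1$, giving the same constant weight $(p-1)q/2p$. Your added remarks on injectivity (hence dimension $m$) just make explicit what the paper leaves as "the desired conclusions then follow."
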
 

\begin{proof}
Let $D$ be a skew set of $\gf(q)$. By definition $|D|=(q-1)/2$ and $xD$ is also a skew set of $\gf(q)$ for 
every $x \in \gf(q)^*$. For any $x \in \gf(q)^*$, it follows from (\ref{eqn-weight}) that 
\begin{eqnarray*}
\wt(\bc_x) = \frac{(p-1)(q-1)/2-\sum_{y=1}^{(p-1)/2} \chi_1(y xD \cup (-y xD))}{p} 
                   = \frac{(p-1)q}{2p}. 
\end{eqnarray*}
The desired conclusions then follow. 
\end{proof}

We remark that the code of Theorem \ref{thm-part2} is optimal as it meets the Griesmer bound. 

A skew Hadamard difference set in $(\gf(q),\,+)$ is both a difference set and a skew set. By definition, any 
skew Hadamard difference set in $(\gf(q),\,+)$ must have parameters $(q, \,(q-1)/2, \,(q-3)/4)$, where $q 
\equiv 3 \pmod{4}$. According to Theorem \ref{thm-part2}, any skew Hadamard difference set in $(\gf(q),\,+)$ 
gives a one-weight code over $\gf(p)$. 

The first skew Hadamard difference set was the Paley set $D$, which is set of all nonzero squares in $\gf(q)$, 
where $q \equiv 3 \pmod{4}$. 
Recently, many new constructions of skew Hadamard difference sets have been discovered. For detailed 
information, the reader is referred to \cite{DingYuan06}, \cite{DWX07}, \cite{FengXiang}, \cite{Momihara}, 
and \cite{WQWX07}.

\section{Linear codes from the images of functions $f$ on $\gf(q)$} 

Let $f(x)$ be a function from $\gf(q)$ to $\gf(q)$. We define 
$$ 
D(f):=\{f(x): x \in \gf(q)\} \setminus \{0\}.  
$$
In this section, we consider the code $\C_{D(f)}$. In general, it is difficult to determine the length $n_f:=|D(f)|$ of this code, 
not to mention its weight distribution. However, in certain special cases, the parameters and the weight distribution of $\C_{D(f)}$ 
can be settled.

\subsection{The codes $\C_{D(f)}$ from quadratic functions over $\gf(p^m)$ for odd $p$}  

Throughout this section, let $q=p^m$ be odd. 
A polynomial $f$ over $\gf(q)$ of the form 
$$ 
f(x)=\sum_{i \in I} \sum_{j \in J} a_{i,j} x^{p^i+p^j}  
$$ 
is called a {\em quadratic form} over $\gf(q)$, where $a_{i,j} \in \gf(q)$,  and 
$I$ and $J$ are subsets of $\{0,1,2, \ldots, m-1\}$. 

Note that $\gf(q)$ is a vector space of dimension $m$ over $\gf(p)$. The rank of the quadratic form $f$ over $\gf(q)$ 
is defined to be the codimension of the $\gf(p)$-vector space 
$$ 
V_f=\{x \in \gf(q): f(x+z)-f(x)-f(z)=0 \mbox{ for all } z \in \gf(q)\}.  
$$
That is $|V_f|=p^{m-r}$, where $r$ denotes the rank of $f$. 

It is still very difficult to determine the length $n_f$ of the code $\C_{D(f)}$ for general quadratic forms $f$, 
let alone the weight distribution of the code $\C_{D(f)}$. However, under certain conditions the 
weight distribution of $\C_{D(f)}$ can be worked out. To this end, we need the following lemma \cite{ZD13}. 

\begin{lemma}\label{lem-ZD13} 
Let $f$ be a quadratic form of rank $r$ over $\gf(q)$. If $r$ is even, then 
$$ 
\sum_{y \in \gf(p)^*} \sum_{x \in \gf(q)} \chi_1(yf(x)) = \pm (p-1)p^{m-\frac{r}{2}}. 
$$
If $r$ is odd, then 
$$ 
\sum_{y \in \gf(p)^*} \sum_{x \in \gf(q)} \chi_1(yf(x)) = 0. 
$$
\end{lemma}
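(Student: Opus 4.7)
The plan is to evaluate the inner sum $\sum_{x \in \gf(q)} \chi_1(yf(x))$ for each fixed $y \in \gf(p)^*$ via the classification of quadratic forms over $\gf(p)$, reduce it to a product of quadratic Gauss sums, and then exploit the resulting $y$-dependence when summing over $\gf(p)^*$.

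First I would fix $y \in \gf(p)^*$, choose a $\gf(p)$-basis of $\gf(q)$, and regard $Q_y(x) := \tr(yf(x))$ as a $\gf(p)$-quadratic form in $m$ variables. A key preliminary is to verify that the rank of $Q_y$ as a $\gf(p)$-form equals the rank $r$ of $f$ defined via $V_f$; note that since $Q_y = y Q_1$ (as $y \in \gf(p)$ commutes with trace), this rank is automatically independent of $y \in \gf(p)^*$. By the classification of quadratic forms over a finite field of odd characteristic (cf.\ Lidl--Niederreiter, Ch.~6), there is an invertible $\gf(p)$-linear change of coordinates bringing $Q_y$ into the diagonal form $\alpha_1(y) z_1^2 + \cdots + \alpha_r(y) z_r^2$ with $\alpha_i(y) \in \gf(p)^*$. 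The essential observation is that we may take $\alpha_i(y) = y\, \alpha_i(1)$, so that $\prod_{i=1}^r \eta(\alpha_i(y)) = \eta(y)^r \Delta$, where $\eta$ is the quadratic character of $\gf(p)^*$ and $\Delta := \prod_{i=1}^r \eta(\alpha_i(1)) \in \{\pm 1\}$ is independent of $y$.

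Next I would compute the inner sum using one-variable Gauss sums: the $m-r$ absent variables contribute a factor $p^{m-r}$, while each essential variable contributes $\eta(\alpha_i(y)) G$, where $G = \sum_{z \in \gf(p)^*} \eta(z) \epsilon_p^z$ is the quadratic Gauss sum satisfying $G^2 = \eta(-1) p$. Hence
\[
\sum_{x \in \gf(q)} \chi_1(yf(x)) = p^{m-r}\, \eta(y)^r\, \Delta\, G^r .
\]
Summing over $y \in \gf(p)^*$ then produces the factor $\sum_{y \in \gf(p)^*} \eta(y)^r$. When $r$ is even this equals $p-1$, and $G^r = (\eta(-1)p)^{r/2}$, yielding the claimed value $\pm (p-1) p^{m-r/2}$ with sign $\Delta \cdot \eta(-1)^{r/2}$. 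When $r$ is odd, $\sum_{y \in \gf(p)^*} \eta(y) = 0$, so the total sum vanishes.

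The main obstacle is the rank identification in Step~1. A priori the radical of $Q_y$, namely $\{x : \tr(yB(x,z)) = 0 \text{ for all } z\}$ with $B(x,z) = f(x+z) - f(x) - f(z)$, could strictly contain $V_f = \{x : B(x,z) = 0 \text{ for all } z\}$, since vanishing of the trace of $yB(x,z)$ is weaker than vanishing of $B(x,z)$ itself. Reconciling the two ranks requires exploiting the explicit Frobenius structure $B(x,z) = \sum_{i,j} a_{ij}(x^{p^i} z^{p^j} + x^{p^j} z^{p^i})$ together with the non-degeneracy of the trace pairing $(u,v) \mapsto \tr(uv)$ on $\gf(q)$, which together force the $\gf(p)$-linear functional $z \mapsto \tr(y B(x,z))$ to be nonzero whenever $L_x := B(x,\cdot) \not\equiv 0$. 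Once this identification is in place, the remaining Gauss sum calculations in Steps~2 and 3 are routine.
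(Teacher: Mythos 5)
The paper itself offers no proof of this lemma --- it is quoted from the reference [ZD13], where the quadratic forms are the $\gf(p)$-valued objects $\tr\bigl(\sum a_{i,j}x^{p^i+p^j}\bigr)$ and the rank is that of the $\gf(p)$-valued form (compare the paper's later Lemma on trace forms, where the same identity is restated in exactly that setting). Your diagonalization-plus-Gauss-sum computation is the standard argument, and granting your Step 1 the rest is correct: $Q_y=yQ_1$ gives $\alpha_i(y)=y\alpha_i(1)$, the inner sum is $p^{m-r'}\eta(y)^{r'}\Delta G^{r'}$ with $r'$ the $\gf(p)$-rank of $Q_1$, and summing $\eta(y)^{r'}$ over $y\in\gf(p)^*$ yields $p-1$ or $0$ according to the parity of $r'$.

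The gap is exactly where you located it, and your proposed repair fails. Writing $B(x,z)=\sum_k b_k(x)z^{p^k}$ and moving the Frobenius powers of $z$ onto the coefficients gives $\tr(yB(x,z))=\tr(yc(x)z)$ with $c(x)=\sum_k b_k(x)^{p^{m-k}}$; nondegeneracy of the trace pairing only tells you this functional vanishes identically iff $c(x)=0$, and the single element $c(x)$ can vanish while the individual $b_k(x)$ --- equivalently $L_x$ --- do not. Concretely, take $m=2$ and $f(x)=ax^{p+1}$ with $a\neq 0$ and $a+a^p=0$ (e.g.\ $p=3$ and $a^2=-1$ in $\gf(9)$). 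Then $B(x,z)=a(x^pz+xz^p)\not\equiv 0$ for $x\neq 0$, so $V_f=\{0\}$ and the paper's rank is $r=2$; yet $\tr(yax^{p+1})=y(a+a^p)x^{p+1}=0$ identically, so the double sum equals $(p-1)p^2$ rather than $\pm(p-1)p^{m-r/2}=\pm(p-1)p$. Thus the identification of the two ranks is not merely unproved in your write-up --- it is false in general, and with the paper's literal definition of rank (codimension of $V_f$ for the $\gf(q)$-valued $f$) the lemma itself fails for such $f$. The statement is correct when $r$ is taken to be the rank of the $\gf(p)$-valued form $\tr(f(x))$ (equivalently of $\tr(yf(x))$ for any $y\in\gf(p)^*$, since $Q_y=yQ_1$), and the two ranks do coincide for the specific forms used later in the paper (such as $x^2$ and $x^{p^\ell+1}$), but to make your proof correct you must either add the coincidence of the two ranks as a hypothesis and verify it in each application, or redefine $r$ as the $\gf(p)$-rank of $\tr(f)$ and delete Step 1.
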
 

We are now ready to prove the following theorem. 

\begin{theorem}\label{thm-qfcodes}
Let $f$ be a quadratic form of rank $r$ over $\gf(q)$ such that 
\begin{itemize}
\item $f(0)=0$ and $f(x) \neq 0$ for all $x \in \gf(q)^*$; and 
\item $f$ is $e$-to-$1$ on $\gf(q)^*$ (i.e. $f(x)=u$ has either $e$ solutions $x \in \gf(q)^*$ or no solution for each $u \in \gf(q)^*$), 
          where $e$ is a positive integer. 
\end{itemize}

If $r$ is odd, then $\C_{D(f)}$ is a one-weight code over $\gf(p)$ with parameters 
$[(q-1)/e, \,m, \,(p-1)q/ep]$. 

If $r$ is even, then $\C_{D(f)}$ is a two-weight code over $\gf(p)$ with parameters $[(q-1)/e,\, m,\, (p-1)(q-p^{m-r/2})/ep]$ 
and weight enumerator 
\begin{eqnarray}
1+ \frac{q-1}{2} z^{(p-1)(q-p^{m-r/2})/ep} + \frac{q-1}{2} z^{(p-1)(q+p^{m-r/2})/ep}. 
\end{eqnarray} 
\end{theorem}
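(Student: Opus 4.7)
The plan is to compute $\wt(\bc_x)$ for each nonzero $x$ via formula (\ref{eqn-weight}), combining it with the character-sum evaluation in Lemma \ref{lem-ZD13}, and then to determine the frequencies in the two-weight case by a first-moment count. Since $f(0)=0$, $f$ is nonvanishing on $\gf(q)^*$, and $f$ is $e$-to-$1$ there, the image $D(f)$ consists of exactly $(q-1)/e$ nonzero elements, so the length is $n_f = (q-1)/e$. For $x \in \gf(q)^*$ and $y \in \gf(p)^*$, the fibre structure and the fact that $f(0)=0$ give
\begin{equation*}
e \cdot \chi_1(yxD) \;=\; \sum_{z \in \gf(q)^*} \chi_1\bigl(y \cdot xf(z)\bigr) \;=\; \sum_{z \in \gf(q)} \chi_1\bigl(y \cdot xf(z)\bigr) - 1 .
\end{equation*}

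Next I would note that for every $x \in \gf(q)^*$ the scaled form $xf$ is again a quadratic form, and it has the same rank $r$ as $f$ because $V_{xf} = V_f$ whenever $x \ne 0$. Summing the identity above over $y \in \gf(p)^*$ and invoking Lemma \ref{lem-ZD13} applied to $xf$ therefore yields
\begin{equation*}
\sum_{y \in \gf(p)^*} \chi_1(yxD) \;=\; \begin{cases} -(p-1)/e, & r \text{ odd,} \\ (p-1)\bigl(\pm p^{m-r/2} - 1\bigr)/e, & r \text{ even.} \end{cases}
\end{equation*}
Plugging this and $n = (q-1)/e$ into (\ref{eqn-weight}) produces the single weight $(p-1)q/(ep)$ in the odd case and the two candidate weights $(p-1)(q \mp p^{m-r/2})/(ep)$ in the even case. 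Each of these is strictly positive, so $\bc_x \ne \bzero$ for every $x \ne 0$, which shows $x \mapsto \bc_x$ is injective and hence $\dim \C_{D(f)} = m$. This completely settles the odd case.

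The last step, and the main obstacle, is to prove that in the even case each of the two candidate weights occurs exactly $(q-1)/2$ times. The sign in Lemma \ref{lem-ZD13} depends on the discriminant of $xf$, and tracking it explicitly as $x$ varies is cumbersome; I would bypass this by a first-moment argument. From $\sum_{x \in \gf(q)^*} \chi_1(yxd) = -1$ for every $d \ne 0$ and $y \in \gf(p)^*$, one immediately obtains
\begin{equation*}
\sum_{x \in \gf(q)^*} \sum_{y \in \gf(p)^*} \chi_1(yxD) \;=\; -\frac{(p-1)(q-1)}{e} ,
\end{equation*}
and therefore $\sum_{x \in \gf(q)^*} \wt(\bc_x) = (p-1)q(q-1)/(ep)$. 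Writing $N_\pm$ for the multiplicities of the two weights $w_\pm = (p-1)(q \pm p^{m-r/2})/(ep)$, the two linear equations $N_+ + N_- = q - 1$ and $N_+ w_+ + N_- w_- = (p-1)q(q-1)/(ep)$ have the unique solution $N_+ = N_- = (q-1)/2$. This yields the stated weight enumerator and completes the proof.
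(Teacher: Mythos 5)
Your proposal is correct and follows essentially the same route as the paper: the same fibre-counting identity reducing $\chi_1(yxD)$ to the full exponential sum over $\gf(q)$, the same application of Lemma~\ref{lem-ZD13} to the scaled form $xf$ (with the added, correct, observation that $V_{xf}=V_f$ preserves the rank), and the same positivity argument for the dimension. Your ``first-moment'' count of the two frequencies is exactly the second Pless power moment $\sum_x \wt(\bc_x) = (p-1)p^{m-1}n_f$ that the paper invokes, merely rederived directly from the character sum, so there is no substantive difference.
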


\begin{proof}
Since $f$ is $e$-to-$1$ in $\gf(q)^*$, we have that $n_f=|D(f)|=(q-1)/e$ and 
\begin{eqnarray*}
\sum_{y \in \gf(p)^*} \sum_{x \in \gf(q)} \chi_1(y f(x)) 
&=& \sum_{y \in \gf(p)^*} \left(1 + \sum_{x \in \gf(q)^*} \chi_1(y f(x)) \right) \\
&=& p-1 + \sum_{y \in \gf(p)^*}  \sum_{x \in \gf(q)^*} \chi_1(y f(x))  \\
&=& p-1 + e\sum_{y \in \gf(p)^*}  \chi_1(y D(f)). 
\end{eqnarray*}
It then follows from Lemma \ref{lem-ZD13} that 
\begin{eqnarray*}
\sum_{y \in \gf(p)^*}  \chi_1(y D(f)) = 
\left\{ 
\begin{array}{ll}
-\frac{(p-1)(1\pm p^{m-r/2})}{e} & \mbox{ if $r$ even} \\
-\frac{p-1}{e} & \mbox{ if $r$ odd.} 
\end{array}
\right. 
\end{eqnarray*}

Note that $uf$ is also a quadratic form over $\gf(q)$ and satisfies all the conditions of Theorem 
\ref{thm-qfcodes} for every $u \in \gf(q)^*$.  
It then follows from (\ref{eqn-weight}) that the Hamming weight of the codeword $\bc_u$ is 
given by 
\begin{eqnarray*}
\wt(\bc_u)=\left\{ 
\begin{array}{ll}
\frac{(p-1)(q \pm p^{m-r/2})}{ep} & \mbox{ if $r$ even} \\
\frac{(p-1)q}{ep} & \mbox{ if $r$ odd,} 
\end{array}
\right. 
\end{eqnarray*}
where $u \in \gf(q)^*$. Hence, the dimension of the code is $m$, as $\bc_u>0$ for each $u \in \gf(q)^*$. 

If $r$ is odd, the code is a one-weight code with the nonzero weight $(p-1)q/ep$. If $r$ is even, the code has 
the following nonzero weights 
$$ 
w_i = \frac{(p-1)(q +(-1)^i p^{m-r/2})}{ep} 
$$
for $i \in \{1,2\}$. 
Since $0 \not\in D(f)$, the minimum distance of the dual code $\C_{D(f)}^\perp$ is at least 2. 
The first two Pless Power Moments \cite[p.260]{HP} lead to the following system of equations:  
\begin{eqnarray*}
\left\{ 
\begin{array}{lll}
A_{w_1}+A_{w_2} &=& p^m-1, \\
w_1A_{w_1}+w_2A_{w_2} &=& (p-1)p^{m-1}n_f . 
\end{array}
\right. 
\end{eqnarray*} 
Solving this system of equations gives the desired weight distribution for the case $r$ being even. 
This completes the proof. 
\end{proof}

Let $D$ be the set of all nonzero squares in $\gf(q)$. It is well known that $D$ is a $(q, \,(q-1)/2, \,(q-3)/2)$ difference set 
in $(\gf(q), \,+)$ if $q \equiv 3 \pmod{4}$, and an almost difference set with parameters 
$\left(q, \,(q-1)/2, \,(q-5)/4, \,(q-1)/2\right)$ if $q \equiv 1 \pmod{4}$. 

As an special case of Theorem \ref{thm-qfcodes}, we have the following.  

\begin{corollary}\label{thm-part1}
Let $D$ be the set of all quadratic residues in $\gf(p^m)^*$. 
If $m$ is odd, then $\C_D$ is a one-weight code over $\gf(p)$ with parameters 
$[(q-1)/2, \,m, \,(p-1)q/2p]$. 

If $m$ is even, then $\C_D$ is a two-weight code over $\gf(p)$ with parameters $[(q-1)/2,\, m,\, (p-1)(q-\sqrt{q})/2p]$ 
and weight enumerator 
\begin{eqnarray}
1+ \frac{q-1}{2} z^{(p-1)(q-\sqrt{q})/2p} + \frac{q-1}{2} z^{(p-1)(q+\sqrt{q})/2p}. 
\end{eqnarray} 
\end{corollary}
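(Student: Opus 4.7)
The plan is to apply Theorem \ref{thm-qfcodes} to the quadratic form $f(x) = x^2$ and verify that its $D(f)$ coincides with the set of nonzero quadratic residues.

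First I would identify $f(x) = x^2$ as a quadratic form (take $a_{0,0} = 1$ in the general definition, with all other coefficients zero). Then by definition $D(f) = \{x^2 : x \in \gf(q)\} \setminus \{0\}$ is exactly the set $D$ of nonzero quadratic residues in $\gf(q)$. The hypotheses of Theorem \ref{thm-qfcodes} are immediate for this $f$: we have $f(0) = 0$ and $f(x) \ne 0$ for $x \in \gf(q)^*$ since $\gf(q)$ is a field, and $f$ is $2$-to-$1$ on $\gf(q)^*$ because $f(x) = f(-x)$ and $-x \ne x$ (as $p$ is odd). Thus we can apply Theorem \ref{thm-qfcodes} with $e = 2$.

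Next I would compute the rank $r$ of $f$. From the definition, $V_f = \{x \in \gf(q) : f(x+z) - f(x) - f(z) = 0 \text{ for all } z \in \gf(q)\}$, and
\[
f(x+z) - f(x) - f(z) = (x+z)^2 - x^2 - z^2 = 2xz.
\]
Since $p$ is odd, $2xz = 0$ for all $z \in \gf(q)$ forces $x = 0$, so $V_f = \{0\}$ and $|V_f| = p^0$. Hence $r = m$.

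Now I just plug into Theorem \ref{thm-qfcodes}. If $m$ is odd, then $r = m$ is odd, and the theorem gives a one-weight code with parameters $[(q-1)/2, m, (p-1)q/(2p)]$, matching the first assertion. If $m$ is even, then $r = m$ is even, and substituting $e = 2$ and $p^{m-r/2} = p^{m/2} = \sqrt{q}$ into the stated parameters and weight enumerator gives exactly the claimed $[(q-1)/2, m, (p-1)(q-\sqrt{q})/(2p)]$ parameters with weight enumerator $1 + \tfrac{q-1}{2} z^{(p-1)(q-\sqrt{q})/(2p)} + \tfrac{q-1}{2} z^{(p-1)(q+\sqrt{q})/(2p)}$. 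There is really no main obstacle here since this is essentially a direct specialization; the only point requiring genuine verification is the rank computation, which is a one-line calculation because the symmetric bilinear form associated to $x^2$ is nondegenerate in odd characteristic.
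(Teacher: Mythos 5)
Your proposal is correct and matches the paper's own proof, which likewise observes that $f(x)=x^2$ is a quadratic form of rank $m$ satisfying the hypotheses of Theorem \ref{thm-qfcodes} with $e=2$ and then specializes. Your explicit verification of the rank via $f(x+z)-f(x)-f(z)=2xz$ is a detail the paper leaves implicit, but the route is identical.
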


\begin{proof}
Note that $f(x)=x^2$ is a quadratic form of rank $m$ over $\gf(q)$ satisfying the conditions of Theorem \ref{thm-qfcodes}, 
with $e=2$. The desired conclusions follow from Theorem \ref{thm-qfcodes}. 
\end{proof}

To obtain more classes of one-weight and two-weight codes from Theorem \ref{thm-qfcodes}, we need to 
find quadratic forms $f$ over $\gf(q)$ satisfying the conditions of Theorem \ref{thm-qfcodes}. Below are 
a few more examples. 

\begin{example} 
$f(x)=x^{p^\ell+1}$ is a quadratic form over $\gf(q)$ satisfying the conditions of Theorem \ref{thm-qfcodes}, 
where $e=\gcd(q-1, p^\ell +1)$. 
\end{example}  

\begin{example} 
$f(x)=x^{10} -ux^{6} -u^2x^2$ is a quadratic form over $\gf(3^m)$ satisfying the conditions of Theorem \ref{thm-qfcodes}, 
where $u \in \gf(3^m)$, $m$ is odd, and $e=2$. 
\end{example}

\subsection{The codes $\C_{D(f)}$ from the images of some quadratic functions on $\gf(2^m)$}

An \emph{$h$-arc\index{$h$-arc}} in the projective plane $\PG(2,q)$, with $q$ a prime power, is a set 
of $h$-points such that no three of them are collinear. The maximum value for $h$ is $q+1$ if $q$ is 
odd, and $q+2$ if $q$ is even.  If $q$ is odd, $(q+1)$-arcs are called \emph{ovals.\index{ovals}} If $q$ 
is even, $(q+2)$-arcs are called \emph{hyperovals.\index{hyperovals}}

In 1998, Maschietti discovered a connection between hyperoval sets and difference sets and proved the 
following result \cite{Masch}. 

\begin{theorem} 
Let $m$ be odd and let $n=2^m-1$. Then 
$$ 
D_\rho:=  \{x^\rho+x: \,x \in \gf(2^m)\} \setminus \{0\}   
$$  
is a difference set with Singer parameters $(2^m-1, \,2^{m-1}-1, \,2^{m-2}-1)$ in $(\gf(2^m)^*, \,\times)$ 
if $x \mapsto x^\rho$ is a permutation on $\gf(2^m)$ and the mapping $\Gamma_\rho: x \mapsto x^\rho+x$ is two-to-one 
on $\gf(2^m)$. 

In particular, the following $\rho$ 
yields difference sets: 
\begin{itemize} 
\item $\rho=2$ (Singer case). 
\item $\rho=6$ (Segre case). 
\item $\rho=2^\sigma + 2^\pi$ with $\sigma=(m+1)/2$ and $4\pi \equiv 1 \bmod{m}$ (Glynn I case). 
\item $\rho=3 \cdot 2^\sigma + 4$ with $\sigma=(m+1)/2$ (Glynn II case).    
\end{itemize} 
\end{theorem}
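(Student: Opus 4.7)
The plan is to reduce the difference-set property to a multiplicative character-sum identity on $\gf(2^m)$ and then evaluate that identity explicitly.

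\textbf{Reduction.} The 2-to-1 hypothesis on $\Gamma_\rho$ immediately gives $|\mathrm{image}(\Gamma_\rho)| = 2^{m-1}$ and hence $|D_\rho| = 2^{m-1} - 1 = k$. Moreover, $\Gamma_\rho^{-1}(0) = \{0\} \cup \{x \neq 0 : x^{\rho-1} = 1\}$ having size exactly $2$ forces $\gcd(\rho - 1, 2^m - 1) = 1$; combined with the first hypothesis $\gcd(\rho, 2^m - 1) = 1$, both $x \mapsto x^\rho$ and $x \mapsto x^{\rho - 1}$ are permutations of $\gf(2^m)^*$. By the standard Fourier criterion for difference sets in finite abelian groups, it suffices to show $|\chi(D_\rho)|^2 = k - \lambda = 2^{m-2}$ for every non-trivial multiplicative character $\chi$ of $(\gf(2^m)^*, \times)$. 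Extending $\chi$ by $\chi(0) := 0$, the 2-to-1 hypothesis yields
\[
2\chi(D_\rho) \;=\; \sum_{x \in \gf(2^m)} \chi(x^\rho + x) \;=:\; S(\chi),
\]
so the task becomes proving $|S(\chi)|^2 = 2^m$.

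\textbf{Character-sum evaluation.} Expand $|S(\chi)|^2$, restrict to $x, y \in \gf(2^m)^* \setminus \{1\}$ (the other terms vanish), substitute $x = uy$ with $u \in \gf(2^m)^*$, factor $y$ from both $x^\rho + x = y(u^\rho y^{\rho-1} + u)$ and $y^\rho + y = y(y^{\rho-1} + 1)$, and cancel $\chi(y)\overline{\chi(y)} = 1$. A further substitution $v = (uy)^{\rho-1}$—legitimate because $x \mapsto x^{\rho - 1}$ is a permutation—turns the inner $y$-sum for each fixed $u \neq 1$ into a fractional-linear character sum $\sum_v \chi\!\bigl((v+1)/(v + u^{\rho-1})\bigr)$ over $v \in \gf(2^m)^* \setminus \{1, u^{\rho-1}\}$. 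Since $v \mapsto (v+1)/(v + u^{\rho-1})$ is a bijection of the projective line over $\gf(2^m)$, tracking the three excluded points collapses this to $-1 - \overline{\chi(u^{\rho-1})}$; restoring the prefactor $\chi(u^\rho)$ gives $T(u) = -\chi(u^\rho) - \chi(u)$. The $u = 1$ term contributes $2^m - 2$, and summing $T(u)$ over $u \in \gf(2^m)^* \setminus \{1\}$ using $\sum_{u \neq 0}\chi(u) = 0 = \sum_{u \neq 0}\chi(u^\rho)$ (the latter by the permutation $u \mapsto u^\rho$) produces $|S(\chi)|^2 = (2^m - 2) + 1 + 1 = 2^m$.

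\textbf{Particular cases.} For $\rho = 2$, $x \mapsto x^2$ is the Frobenius, and $\Gamma_2(x) = x^2 + x$ is $\gf(2)$-linear with kernel $\{0, 1\}$, hence 2-to-1. The three other values $\rho = 6$ (Segre), $\rho = 2^\sigma + 2^\pi$ (Glynn I), and $\rho = 3\cdot 2^\sigma + 4$ (Glynn II) correspond to the classical monomial hyperovals of Segre and Glynn, whose defining properties encode both hypotheses, with the assumption that $m$ is odd playing an essential role.

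\textbf{Main obstacle.} The technical crux is the double substitution together with the bookkeeping of excluded points (the pole $v = u^{\rho-1}$, the value $v = 1$ corresponding to the unique nonzero $\alpha$ with $\alpha^\rho + \alpha = 0$, and the preimage of $0$ under the fractional linear map). Miscounting any of these would replace $|S(\chi)|^2 = 2^m$ by the wrong value; the delicate point is that the $+1$ contributions from $\sum_{u \neq 0, 1}\chi(u)$ and $\sum_{u \neq 0, 1}\chi(u^\rho)$ exactly cancel the $-2$ deficit from $u = 1$, and this balance requires both hypotheses simultaneously.
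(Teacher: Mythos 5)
Your argument is correct, and it is genuinely different from what the paper does: the paper offers no proof of this statement at all --- it is quoted verbatim from Maschietti's paper \cite{Masch}, whose original argument is geometric, deducing the difference-set property from the fact that every line of $\PG(2,2^m)$ meets a hyperoval in $0$ or $2$ points. You instead verify the standard Fourier criterion $|\chi(D_\rho)|^2=k-\lambda=2^{m-2}$ for every nontrivial multiplicative character $\chi$, which reduces to $|S(\chi)|^2=2^m$ via the $2$-to-$1$ hypothesis. I checked the crux: the $2$-to-$1$ hypothesis does force $\gcd(\rho-1,2^m-1)=1$, the prefactor after the substitution $v=(uy)^{\rho-1}$ is indeed $\chi(u^\rho)$, the M\"{o}bius map $v\mapsto(v+1)/(v+u^{\rho-1})$ has determinant $u^{\rho-1}+1\neq 0$ for $u\neq 1$, and the image of $\gf(2^m)\setminus\{0,1,u^{\rho-1}\}$ under it is exactly $\gf(2^m)\setminus\{0,1,1/u^{\rho-1}\}$, giving $-1-\overline{\chi(u^{\rho-1})}$ and hence $T(u)=-\chi(u^\rho)-\chi(u)$; the final tally $(2^m-2)+1+1=2^m$ is right, and the consistency check $\lambda=k(k-1)/(v-1)=2^{m-2}-1$ holds. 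Two remarks on what each approach buys: your computation nowhere uses that $m$ is odd, so it actually establishes the general implication (permutation plus $2$-to-$1$ implies difference set) for all $m$, with oddness entering only because the Segre and Glynn exponents require it; Maschietti's route, by contrast, explains the geometric provenance of these sets and of the four listed exponents (the known monomial hyperovals), which your last paragraph rightly delegates to Segre's and Glynn's theorems rather than reproving.
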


In the rest of this section, let $m$ be odd. Let $\rho=2^i + 2^j$, 
where $i$ and $j$ are nonnegative integers such that $0 \leq i < j<m$.  Let $\kappa=j-i$. Define 
$$ 
\image(\Gamma_\rho)=\{\Gamma_\rho(x): \,x \in \gf(2^m)\}. 
$$ 
We now study the code $\C_{D(\Gamma_\rho)}$ when $\rho=2^i + 2^j$ and the mapping $\Gamma_\rho(x)=x^\rho+x$ 
satisfies certain conditions. To this end, we define the following Boolean function from $\gf(2^m)$ 
to $\gf(2)$: 
\begin{eqnarray}
f(x)=\left\{ \begin{array}{ll}
1 \mbox{ if } x \in \image(\Gamma_\rho), \\
0 \mbox{ otherwise.} 
\end{array}
\right. 
\end{eqnarray}

The following lemma is proved in \cite{Xiang99}. 

\begin{lemma}\label{lem-Xiang}
Let $\rho=2^i + 2^j$. If $\Gamma_\rho$ is two-to-one on $\gf(2^m)$ and $\gcd(2^\kappa+1, \,2^m-1)=1$, where $\kappa=j-i$, 
then
\begin{eqnarray}
\hat{f}(b)=\left\{ \begin{array}{ll}
                           0 & \mbox{ if } b =0, \\
                           0 & \mbox{ if } b \ne 0, \ \tr(b^\ell)=0, \\
                           \pm 2^{(m+1)/2} &  \mbox{ if } b \ne 0, \ \tr(b^\ell)=1, 
\end{array}
\right. 
\end{eqnarray} 
where $\hat{f}$ denotes the Walsh transform of $f$ and 
$$ 
\ell = \frac{2^i+2^j-1}{2^\kappa +1}. 
$$

Furthermore, if $\tr(b^\ell)=1$, 
$$ 
\hat{f}(b)=(-1)^{\tr(u+u^{2^\kappa +1})} \hat{f}(1), 
$$
where $b^\ell=1+u^{2^{j+\kappa}} +u^{2^{j-\kappa}}$. 
\end{lemma}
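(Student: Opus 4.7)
The plan is to reduce $\hat f(b) = \sum_{x \in \gf(2^m)} (-1)^{f(x) + \tr(bx)}$ to a character sum attached to a quadratic Boolean function and then invoke the standard theory of Walsh spectra of such functions. Since $\Gamma_\rho$ is two-to-one on $\gf(2^m)$, $|\image(\Gamma_\rho)| = 2^{m-1}$, which immediately forces $\hat f(0) = 0$. For $b \neq 0$, writing $f$ as the indicator of $\image(\Gamma_\rho)$ and using the two-to-one property to replace the sum over the image by half the sum over $\gf(2^m)$ pulled back through $\Gamma_\rho$, one obtains
$$ \hat f(b) = -\sum_{x \in \gf(2^m)} (-1)^{\tr(b(x^{2^i+2^j}+x))}. $$

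Set $Q_b(x) = \tr(bx^{2^i+2^j})$ and $L_b(x) = \tr(bx)$, so the remaining sum is $S(b) = \sum_x (-1)^{Q_b(x)+L_b(x)}$ with $Q_b$ a quadratic form over $\gf(2)$. Its associated symplectic form
$$ B_b(x,y) = Q_b(x+y)+Q_b(x)+Q_b(y) = \tr(b(x^{2^i}y^{2^j}+x^{2^j}y^{2^i})) $$
can be rewritten, by transporting the Frobenius powers onto $y$ via $\tr(u^{2^k})=\tr(u)$, in the form $\tr(y\,\Lambda_b(x))$ for an explicit $\gf(2)$-linear map $\Lambda_b$. By the classification of quadratic Boolean forms, $S(b) = 0$ unless $L_b$ vanishes on the radical $V_b = \ker \Lambda_b$, in which case $|S(b)| = 2^{(m+\dim V_b)/2}$.

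The main algebraic step is to show that under $\gcd(2^\kappa+1,\,2^m-1)=1$ the equation $\Lambda_b(x)=0$ reduces, after raising to an appropriate $2$-power, to an equation of the shape $x^{2^\kappa+1} = \alpha_b$; by the coprimality hypothesis this admits a unique nonzero solution, so $\dim V_b = 1$ for every $b\neq 0$ and hence $|S(b)| = 2^{(m+1)/2}$ whenever nonzero. The vanishing condition $L_b|_{V_b}=0$, evaluated at the unique nonzero radical generator, then simplifies after one further application of the Frobenius identity for $\tr$ to the single trace condition $\tr(b^\ell) = 0$ with $\ell = (2^i+2^j-1)/(2^\kappa+1)$. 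Tracking this exponent bookkeeping through the rearrangements is the main technical obstacle of the lemma; together with the preceding paragraph it yields the claimed trichotomy for $\hat f(b)$.

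For the ``furthermore'' clause, I would perform the change of variable $x \mapsto ux$ in $S(b)$, with $u$ chosen so that $b^\ell = 1 + u^{2^{j+\kappa}}+u^{2^{j-\kappa}}$; such a $u$ exists precisely because the map $u \mapsto 1 + u^{2^{j+\kappa}} + u^{2^{j-\kappa}}$ parameterises the elements of the form $b^\ell$ with $\tr(b^\ell)=1$ (by the previous step). The substitution carries $Q_b+L_b$ to $Q_1+L_1$ plus a constant which, after expanding and again collapsing Frobenius powers inside the trace, reduces to $\tr(u+u^{2^\kappa+1})$. This delivers $\hat f(b) = (-1)^{\tr(u+u^{2^\kappa+1})}\,\hat f(1)$, completing the plan.
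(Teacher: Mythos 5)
First, note that the paper does not actually prove this lemma: it is quoted verbatim from \cite{Xiang99} with the remark ``The following lemma is proved in \cite{Xiang99},'' so there is no in-paper argument to compare yours against. Judged on its own, your overall strategy is the standard (and, in essence, Xiang's) one: use the two-to-one property to rewrite $\hat{f}(b)=-\sum_{x}(-1)^{\tr(b(x^{2^i+2^j}+x))}$ for $b\neq 0$, recognise $Q_b(x)=\tr(bx^{2^i+2^j})$ as a quadratic Boolean form, and read off the value of the exponential sum from the radical $V_b$ of the associated symplectic form together with the behaviour of the linear part on $V_b$. Steps 1--3 of your outline are correct.

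The genuine gap is in the ``main algebraic step.'' Eliminating one of the two terms of $\Lambda_b(x)=b^{2^{m-j}}x^{2^{m-\kappa}}+b^{2^{m-i}}x^{2^{\kappa}}$ by raising to a $2$-power yields an equation of the form $x^{2^{2\kappa}-1}=c_b$ (equivalently $\bigl(x^{2^{\kappa}+1}\bigr)^{2^{\kappa}-1}=c_b'$), \emph{not} $x^{2^{\kappa}+1}=\alpha_b$; no amount of Frobenius twisting removes the factor $2^{\kappa}-1$ from the exponent. The number of nonzero solutions is therefore $\gcd(2^{2\kappa}-1,2^m-1)=2^{\gcd(2\kappa,m)}-1=2^{\gcd(\kappa,m)}-1$ (using $m$ odd, and degeneracy of an alternating form in odd dimension to guarantee solvability), so $\dim V_b=\gcd(\kappa,m)$. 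The hypothesis $\gcd(2^{\kappa}+1,2^m-1)=1$ cannot rescue this: for odd $m$ it is \emph{automatically} satisfied for every $\kappa$ (since $\gcd(2^{2\kappa}-1,2^m-1)=\gcd(2^{\kappa}-1,2^m-1)$ when $m$ is odd), yet it does not force $\gcd(\kappa,m)=1$ --- e.g.\ $m=9$, $\kappa=3$ satisfies it while giving a $3$-dimensional radical and hence $|S(b)|\in\{0,2^6\}$ rather than $2^5$. The only remaining hypothesis that can exclude such cases is the two-to-one property of $\Gamma_\rho$, which your argument never uses after the initial rewriting of the sum; establishing that link (or otherwise proving $\dim V_b=1$) is the real content of the lemma and is missing. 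Secondary, but also unresolved: the identification of the nonvanishing condition $(Q_b+L_b)|_{V_b}=0$ with $\tr(b^{\ell})=1$ is deferred to ``bookkeeping,'' and in the ``furthermore'' clause the proposed substitution $x\mapsto ux$ cannot simultaneously normalise $bu^{\rho}$ and $bu$ to $1$, so that step needs a different (and more delicate) change of variables than the one described.
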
 

\begin{table}[ht]
\begin{center} 
\caption{The weight distribution of the codes of Theorem \ref{thm-hyperovalDS}}\label{tab-semibentfcode6}
\begin{tabular}{|c|c|} \hline
Weight $w$ &  Multiplicity $A_w$  \\ \hline  
$0$          &  $1$ \\ \hline 
$2^{m-2}-2^{(m-3)/2}$ & $2^{m-2}+2^{(m-3)/2}$ \\ \hline 
$2^{m-2}$ & $2^{m-1}-1$ \\ \hline 
$2^{m-2}+2^{(m-3)/2}$ & $2^{m-2}-2^{(m-3)/2}$ \\ \hline 
\end{tabular}
\end{center} 
\end{table} 

\begin{theorem}\label{thm-hyperovalDS}
Let $\rho=2^i + 2^j$. If $\Gamma_\rho$ is two-to-one on $\gf(2^m)$ and $\gcd(2^\kappa+1, \,2^m-1)=1$, where $\kappa=j-i$, 
then the binary code $\C_{D(\Gamma_\rho)}$ has parameters $[2^{m-1}-1, \,m, \,2^{m-2} - 2^{(m-3)/2}]$ and the weight 
distribution of Table \ref{tab-semibentfcode6}.
\end{theorem}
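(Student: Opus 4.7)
The plan is to apply the generic weight formula $(\ref{eqn-weight})$ with $p=2$, which reduces to $\wt(\bc_x) = (n-\chi_1(xD))/2$ where $D := D(\Gamma_\rho)$, and then to translate the character sum $\chi_1(xD)$ into the Walsh transform $\hat{f}$ so that Lemma~\ref{lem-Xiang} supplies the possible values.

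First I would set up the length and the Walsh-transform identity. Since $\Gamma_\rho$ is two-to-one and $\Gamma_\rho(0)=0$, the image has size $2^{m-1}$, so $n=|D|=2^{m-1}-1$. Because $\image(\Gamma_\rho)=D\cup\{0\}$, the defining equation of $f$ gives, for every $b\neq 0$,
\[
\hat{f}(b)=\sum_{x\in\gf(2^m)}(-1)^{f(x)+\tr(bx)}=-2\bigl(1+\chi_1(bD)\bigr),
\]
i.e. $\chi_1(bD)=-1-\hat{f}(b)/2$. Substituting into the weight formula yields
\[
\wt(\bc_x)=2^{m-2}+\hat{f}(x)/4 \quad\text{for } x\neq 0.
\]
Lemma~\ref{lem-Xiang} then pins $\hat{f}(x)$ to $\{0,\pm 2^{(m+1)/2}\}$, so the only possible nonzero weights are $2^{m-2}$ and $2^{m-2}\pm 2^{(m-3)/2}$. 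All three are strictly positive for $m\geq 3$, so $x\mapsto\bc_x$ is injective and the code has dimension $m$.

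Next I would compute the multiplicities using Parseval together with two Pless power moments. Parseval $\sum_b \hat{f}(b)^2=2^{2m}$, combined with $\hat{f}(0)=0$ and $|\hat{f}(b)|\in\{0,2^{(m+1)/2}\}$, shows that exactly $2^{m-1}$ nonzero $b$ satisfy $\hat{f}(b)\neq 0$, hence $A_{2^{m-2}}=2^{m-1}-1$. Since $0\notin D$, the dual code has minimum distance $\geq 2$, so the first two Pless power moments give $\sum_{i\geq 1}A_i=2^m-1$ and $\sum_{i\geq 1}iA_i=n\cdot 2^{m-1}$. Together with the known value of $A_{2^{m-2}}$, these two linear equations pin down the remaining two multiplicities, and a short computation recovers exactly the entries of Table~\ref{tab-semibentfcode6}.

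The step I expect to be most delicate is the Walsh-transform translation, where the sign convention used in Lemma~\ref{lem-Xiang} must match the one forced by $(\ref{eqn-weight})$; if it does not, one simply replaces $\hat{f}$ by $-\hat{f}$, but the compatibility should be checked carefully. An alternative route would replace Parseval by the bijection $b\mapsto b^\ell$ on $\gf(2^m)^*$ and count by $\tr(b^\ell)$, but that would require the extra verification $\gcd(\ell,2^m-1)=1$, a technicality which the Parseval argument sidesteps.
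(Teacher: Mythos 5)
Your proof is correct, and its first half --- computing the length, converting $\hat{f}(b)=-2\bigl(1+\chi_1(bD)\bigr)$, reading off the three possible weights from Lemma~\ref{lem-Xiang} via (\ref{eqn-weight}), and concluding dimension $m$ from positivity of the weights --- is exactly the paper's argument. (Your worry about sign conventions is moot here: since $\pm 2^{(m+1)/2}$ both occur symmetrically in Lemma~\ref{lem-Xiang}, the weight set and the moment equations are unaffected by a global sign flip of $\hat f$.) Where you genuinely diverge is the multiplicity count. The paper asserts that the dual code has minimum weight at least $3$ and then solves the first \emph{three} Pless power moments as a $3\times 3$ linear system in $A_{w_1},A_{w_2},A_{w_3}$. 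You instead extract $A_{2^{m-2}}=2^{m-1}-1$ directly from Parseval's identity (the number of $b$ with $\hat f(b)\neq 0$ is $2^{2m}/2^{m+1}=2^{m-1}$, and $\hat f(0)=0$) and then need only the first \emph{two} power moments, hence only dual distance $\geq 2$, which is immediate from $0\notin D$. Both routes yield Table~\ref{tab-semibentfcode6}; yours is slightly more self-contained in that it avoids the (true but unproved in the paper) claim that the dual distance is at least $3$, and, as you note, it also sidesteps any need to know that $b\mapsto b^\ell$ permutes $\gf(2^m)^*$. The paper's route has the advantage of being uniform with the moment-method used throughout the rest of the paper.
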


\begin{proof} 
Let $n=2^{m-1}-1$. 
Let $b \in \gf(2^m)^*$ and $\bc_b$ be the codeword of (\ref{eqn-mcodeword}) in $\C_{D(\Gamma_\rho)}$. We now determine 
the Hamming weight of $\bc_b$. 

By definition, we have 
\begin{eqnarray*}
\hat{f}(b) &=& \sum_{x \in \gf(2^m)} (-1)^{f(x) + \tr(bx)} \\
&=& - \sum_{x \in \image(\Gamma_\rho)} (-1)^{\tr(bx)} + 
           \sum_{x \in \gf(2^m) \setminus \image(\Gamma_\rho)} (-1)^{\tr(bx)} \\
&=& - 2 \sum_{x \in \image(\Gamma_\rho)} (-1)^{\tr(bx)} \\
&=& - 2 \left(\sum_{x \in D(\Gamma_\rho)} (-1)^{\tr(bx)}  +1\right) \\
\end{eqnarray*}

It then follows from Lemma \ref{lem-Xiang} that 
$$ 
\chi_b(D(\Gamma_\rho)) \in \{-1, \,\pm 2^{(m-1)/2} -1\}. 
$$
The desired conclusions on the weights and the dimension of this code  $\C_{D(\Gamma_\rho)}$ follow from (\ref{eqn-weight}). 

It is easily seen that the minimum weight of the dual code $\C_{D(\Gamma_\rho)}^\perp$ is at least $3$. Define 
$$ 
w_1=2^{m-2}-2^{(m-3)/2}, \ w_2=2^{m-2}, \ w_3=2^{m-2}+2^{(m-3)/2}.  
$$
We now determine the number $A_{w_i}$ of codewords with weight $w_i$ in $\C_{D(\Gamma_\rho)}$. 
The first three Pless Power Moments \cite[p.260]{HP} lead to the following system of equations:  
\begin{eqnarray}\label{eqn-wtdsemibentfcode6}
\left\{ 
\begin{array}{lll}
A_{w_1}+A_{w_2}+A_{w_3} &=& 2^m-1, \\
w_1A_{w_1}+w_2A_{w_2}+w_3A_{w_3} &=& n 2^{m-1}, \\ 
w_1^2A_{w_1}+w_2^2A_{w_2}+w_3^2A_{w_3} &=& n(n+1) 2^{m-2}.  
\end{array}
\right. 
\end{eqnarray} 
Solving this system of equations gives the desired weight distribution. 
This completes the proof. 

\end{proof} 

In the Segre case, we have $\rho=2^1 + 2^2$, $(i, \,j)=(1,2)$, and $\kappa =1$. Hence $\gcd(2^\kappa +1, \,2^m-1)=1$. 
It is known that $\Gamma_\rho$ is two-to-one. Hence, all the conditions in Theorem \ref{thm-hyperovalDS} are satisfied. 
Thus, the difference set $D_6$ gives a class of bianry linear codes with three weights.   

The following lemma can be easily proved. 

\begin{lemma}\label{lem-dcs} 
Define 
\begin{eqnarray}
\kappa = \left\{ \begin{array}{ll}
    \frac{m+1}{4} & \mbox{ if } m \equiv 3 \pmod{4}, \\
    \frac{m-1}{4} & \mbox{ if } m \equiv 1 \pmod{4}.      
\end{array}
\right. 
\end{eqnarray} 
Then $\gcd(2^\kappa +1, \,2^m-1)=1$. 
\end{lemma}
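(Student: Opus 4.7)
The plan is to split on the residue class of $m$ modulo $4$ and, in each case, reduce $2^m-1$ modulo $2^\kappa+1$. The two definitions of $\kappa$ can be combined into the single statement $4\kappa = m\pm 1$, and the starting observation $2^\kappa\equiv -1\pmod{2^\kappa+1}$, squared, gives $2^{4\kappa}\equiv 1\pmod{2^\kappa+1}$. This identity is the lever that makes the argument drop out in a line or two.

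In the case $m\equiv 1\pmod 4$, so $m=4\kappa+1$, one multiplies to obtain $2^m = 2\cdot 2^{4\kappa}\equiv 2\pmod{2^\kappa+1}$, whence $2^m-1\equiv 1\pmod{2^\kappa+1}$. Any common divisor of $2^m-1$ and $2^\kappa+1$ must then divide $1$, so the gcd is $1$. In the case $m\equiv 3\pmod 4$, so $m=4\kappa-1$, the corresponding identity is $2\cdot 2^m = 2^{4\kappa}\equiv 1\pmod{2^\kappa+1}$, giving $2(2^m-1)\equiv -1\pmod{2^\kappa+1}$. Since $2^\kappa+1$ is odd, $2$ is a unit modulo $2^\kappa+1$, so any common divisor of $2^m-1$ and $2^\kappa+1$ must divide $-1$; again this forces the gcd to be $1$.

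I do not anticipate a real obstacle. The whole proof is a couple of lines of modular arithmetic once one writes the two cases in the unified form $4\kappa = m\pm 1$; the only point that demands care is noting that $2^\kappa+1$ is odd, so that the factor of $2$ in the $m\equiv 3\pmod 4$ case can be cancelled. One could alternatively invoke the general identity that $\gcd(2^a+1,2^b-1)$ equals $1$ when $b/\gcd(a,b)$ is odd, noting that $m$ is odd makes this always hold regardless of $\kappa$; but the direct computation above is shorter, uses no black boxes, and yields exactly what the lemma claims.
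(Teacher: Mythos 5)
Your proof is correct and complete: in both cases the congruence $2^{4\kappa}\equiv 1\pmod{2^\kappa+1}$ reduces $2^m-1$ (or $2(2^m-1)$) to a unit modulo $2^\kappa+1$, which forces the gcd to be $1$. The paper itself gives no proof of this lemma --- it is dismissed with ``can be easily proved'' --- so there is nothing to compare against; your two-line modular computation is exactly the sort of verification intended, and your side remark that $\gcd(2^a+1,2^m-1)=1$ for \emph{every} $a$ once $m$ is odd is also valid and in fact shows the specific choice of $\kappa$ is immaterial to the conclusion.
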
 

In the Glynn I case, $(i, \,j, \,\kappa)=((m+1)/4, \,(m+1)/2, \,(m+1)/4)$ if  $m \equiv 3 \pmod{4}$, 
and $(i, \,j, \,\kappa)=((m+1)/2, \,(3m+1)/4, \,(m-1)/4)$ if  $m \equiv 1 \pmod{4}$. In this case, 
we have $\gcd(2^\kappa +1, \,2^m-1)=1$ by Lemma \ref{lem-dcs}. It is known that, in the 
Glynn I case, $\Gamma_\rho$ is two-to-one. Hence, all the conditions in Theorem \ref{thm-hyperovalDS} are satisfied. 
Thus, the difference set $D_{2^i +2^j}$ gives a class of binary linear codes with three weights.    

In the Glynn II case, $\rho=3 \cdot 2^\sigma + 4$ and the mapping $\Gamma_\rho$ is not quadratic. 
It looks hard to determine the weight distribution of the code $\C_{D(\Gamma_\rho)}$. 
When $m=5$, the binary code $\C_{D(\Gamma_\rho)}$ in the  Glynn II case has parameters $[15, \,5, \,6]$ and 
the weight enumerator $1+10z^6+15z^8+6z^{10}$. When $m=7$, the binary code $\C_{D(\Gamma_\rho)}$ in the  Glynn II case has parameters $[63, \,7, \,28]$ and 
the weight enumerator $1+36z^{28}+63z^{32}+28z^{36}$. When $m \geq 9$, we have the following conjecture. 

\begin{conj} 
Let $\rho=3 \cdot 2^{(m+1)/2} + 4$, where $m \geq 9$ and $m$ is odd. Then the binary linear code $\C_{D(\Gamma_\rho)}$ in the  Glynn II case has parameters 
$[2^{m-1}-1, \,m, \,2^{m-2}-2^{(m-1)/2}]$ and has only the following five nonzero weights: 
$$ 
2^{m-2}, \,2^{m-2} \pm 2^{(m-1)/2}, \,2^{m-2} \pm 2^{(m-3)/2}. 
$$ 
\end{conj}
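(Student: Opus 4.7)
The strategy is to follow the approach of Theorem~\ref{thm-hyperovalDS}. For $b\in\gf(2^m)^*$, formula \eqref{eqn-weight} gives $\wt(\bc_b)=(n-\chi_b(D(\Gamma_\rho)))/2$ with $n=2^{m-1}-1$. Granting the classical fact that the Glynn~II mapping $\Gamma_\rho$ is $2$-to-$1$ on $\gf(2^m)$, so that $|D(\Gamma_\rho)|=n$, one has
\[
\chi_b(D(\Gamma_\rho))=-1+\tfrac12 S_b,\qquad S_b:=\sum_{x\in\gf(2^m)}(-1)^{\tr(b(x^\rho+x))},
\]
and hence $\wt(\bc_b)=2^{m-2}-S_b/4$. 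A direct comparison shows that the five conjectured weights are exactly equivalent to $S_b\in\{0,\pm 2^{(m+1)/2},\pm 2^{(m+3)/2}\}$. Thus the whole conjecture reduces to proving that $S_b$ takes \emph{only} these five values for every nonzero $b$. Granted this spectrum, the minimum weight $2^{m-2}-2^{(m-1)/2}$ is immediate, and the dimension equals $m$ because $\bc_b=\bzero$ would force $S_b=2^m$, contradicting the upper bound $2^{(m+3)/2}$ whenever $m\ge 4$.

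\noindent\textbf{The main obstacle.} Writing $\rho=2^2+2^\sigma+2^{\sigma+1}$ with $\sigma=(m+1)/2$, the integer $\rho$ has binary weight $3$, so the Boolean function $x\mapsto\tr(bx^\rho)$ has algebraic degree~$3$. Lemma~\ref{lem-Xiang} is tailored to quadratic (Gold-type) exponents and does not apply, and computing the Walsh spectrum of a cubic trace monomial is notoriously delicate. This is the crux of the entire conjecture. I would attempt two complementary routes. The first is algebraic-geometric: repeated Artin--Schreier reduction of the even-exponent terms collapses the curve $y^2+y=b(x^\rho+x)$ into the form $y^2+y=b'x^\ell+bx$ with $\ell=\rho/2^{\nu_2(\rho)}$ odd (for instance $\ell=25$ when $m=9$), and the $L$-polynomial of this curve controls $S_b$. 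Since the plain Hasse--Weil bound is far too weak here, one needs a sharp structure theorem for the Frobenius eigenvalues in this family, placing them at $\pm 2^{(m+1)/2}$ or $\pm 2^{(m+3)/2}$ with the correct multiplicities. The second route is the change of variable $x\mapsto x^{2^{m-\sigma}}$, which rewrites $S_b$ as a cubic exponential sum in the exponent $3+2^{(m+3)/2}$ plus a linear term; one would then try to identify this with a known Niho- or Kloosterman-type family whose spectrum has already been computed. The collapse of the five-weight spectrum to three weights at $m=5,7$ would, under either approach, be explained by the reduced exponent landing in a smaller cyclotomic coset for those small values of $m$.

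\noindent\textbf{Summary.} The entire burden of proof rests on the cubic Walsh-spectrum computation; every other step (length, dimension, minimum distance, the exact list of weights) follows mechanically from the formula $\wt(\bc_b)=2^{m-2}-S_b/4$ above. Both routes outlined require genuinely new input beyond the techniques used elsewhere in the paper---either a precise determination of the $L$-polynomial of a growing-genus Artin--Schreier curve, or a pre-existing distribution result on a cubic exponential-sum family---which is almost certainly why the statement is presented as a conjecture rather than a theorem.
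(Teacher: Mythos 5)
This statement is presented in the paper as a \emph{conjecture}: the author explicitly writes that in the Glynn~II case the map $\Gamma_\rho$ is not quadratic, that it ``looks hard to determine the weight distribution,'' and offers only Magma-computed examples for $m=5,7,9,11$. There is no proof in the paper to compare against, and your proposal is not a proof either --- by your own account. What you have done correctly is the reduction: using the $2$-to-$1$ property of $\Gamma_\rho$, the identity $S_b=2(1+\chi_b(D(\Gamma_\rho)))$ and formula (\ref{eqn-weight}) do give $\wt(\bc_b)=2^{m-2}-S_b/4$, so the conjecture is exactly equivalent to the assertion that the exponential sum $S_b=\sum_{x}(-1)^{\tr(b(x^\rho+x))}$ takes only the values $0,\pm 2^{(m+1)/2},\pm 2^{(m+3)/2}$ for $b\neq 0$ (and all five occur). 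This mirrors the proof of Theorem~\ref{thm-hyperovalDS}, and your observations about why Lemma~\ref{lem-Xiang} fails (binary weight of $\rho$ is $3$, so the function is cubic) and about the spectrum collapsing for $m=5,7$ are accurate.

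The genuine gap is the entire substance of the conjecture: no argument, in your proposal or in the paper, establishes the five-value spectrum of $S_b$. Your two proposed routes (Artin--Schreier $L$-polynomial analysis, or matching to a known cubic exponential-sum family after the substitution $x\mapsto x^{2^{m-\sigma}}$) are plausible research directions but are stated as programs, not carried out; neither the eigenvalue rigidity result nor the identification with a known family is supplied. Note also that even if the value set were established, the conjecture as stated asserts that all five weights actually occur (``has only the following five nonzero weights'' together with the claimed minimum distance $2^{m-2}-2^{(m-1)/2}$ forces $S_b=2^{(m+3)/2}$ to be attained), which would require controlling multiplicities, not just the value set --- the first few Pless power moments alone cannot pin down five unknowns. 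So the correct assessment is: your reduction is sound and consistent with the paper's framework, but the statement remains open, exactly as the paper labels it.
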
 

\begin{example} 
When $m=9$, the binary code $\C_{D(\Gamma_\rho)}$ in the  Glynn II case has parameters $[255, \, 9, \, 112]$ and weight enumerator 
$  
1+9z^{112}+108z^{120}+285z^{128}+108z^{136}+z^{144}. 
$ 
\end{example} 

\begin{example} 
When $m=11$, the binary code $\C_{D(\Gamma_\rho)}$ in the  Glynn II case has parameters $[1023, \, 11, \, 480]$ and weight enumerator 
$  
1+22z^{480}+440z^{496}+1155z^{512}+408z^{528}+22z^{544}. 
$ 
\end{example} 

It would be nice if the weight distribution 
of the binary linear code $\C_{D(\Gamma_\rho)}$ in the  Glynn II case can be determined.

\section{Linear codes from the preimage $f^{-1}(b)$ for functions $f$ from $\gf(p^m)$ to $\gf(p)$}

Let $f$ be a function from $\gf(p^m)$ to $\gf(p)$, and let $D$ be any subset of the preimage $f^{-1}(b)$ for any 
$b \in \gf(p)$. In this section, we consider the code $\C_{D}$.  Similarly, it is very hard to determine the parameters 
of this code in general. We shall deal with a few special cases in this section.   

\subsection{The Boolean case} 

Let $f$ be a Boolean function from $\gf(2^m)$ to $\gf(2)$. The \emph{support} of $f$ is defined to be 
$$
D_f=\{x \in\gf(2^m) : f(x)=1\} \subseteq \gf(2^m). 
$$ 
Recall that $n_f=|D_f|$.  

The {\em Walsh transform} of $f$ is defined by 
\begin{eqnarray}\label{eqn-WalshTransform2}
\hat{f}(w)=\sum_{x \in \gf(2^m)} (-1)^{f(x)+\tr(wx)} 
\end{eqnarray} 
where $w \in \gf(2^m)$. The {\em Walsh spectrum} of $f$ is the following multiset 
$$ 
\left\{\left\{ \hat{f}(w): w \in \gf(2^m) \right\}\right\}. 
$$ 

In this section, we investigate the binary code $\C_{D_f}$ with length $n_f$ 
and dimension at most $m$, and will determine the weight distribution of the code $\C_{D_f}$ for several classes 
of Boolean functions $f$ whose supports $D_f$ are certain 2-designs.  

A function $f$ from $\gf(2^m)$ to $\gf(2)$ is called {\em linear} if $f(x+y)=f(x)+f(y)$ for all $(x, y) \in \gf(2^m)^2$. 
A function $f$ from $\gf(2^m)$ to $\gf(2)$ is called {\em affine} if $f$ or $f-1$ is affine. 

The main result of this section is described in the following theorem. 

\begin{theorem}\label{thm-BooleanCodes}
Let symbols and notation be as above. If $f$ is not an affine function, then $\C_{D_f}$ is a binary linear code with 
length $n_f$ and dimension $m$, and its weight distribution is given by the following multiset: 
\begin{eqnarray}\label{eqn-WTBcodes}
\left\{\left\{ \frac{2n_f+\hat{f}(w)}{4}: w \in \gf(2^m)^*\right\}\right\} \cup \left\{\left\{ 0 \right\}\right\}. 
\end{eqnarray} 
\end{theorem}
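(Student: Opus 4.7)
The plan is to specialize the generic weight identity \eqref{eqn-weight} to the binary case and then re-express the character sum in terms of the Walsh transform $\hat{f}(w)$. Because $\gf(2)^{*} = \{1\}$, applying \eqref{eqn-weight} with $D = D_f$ immediately yields
\[
\wt(\bc_w) = \frac{n_f - \chi_1(w D_f)}{2}
\]
for every $w \in \gf(2^m)$, where $\chi_1(y) = (-1)^{\tr(y)}$ is the canonical additive character of $\gf(2^m)$.

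Next I would translate $\chi_1(w D_f)$ into $\hat{f}(w)$ by splitting the defining sum of the Walsh transform according to whether $x$ lies in the support $D_f$:
\[
\hat{f}(w) = \sum_{x \notin D_f} (-1)^{\tr(wx)} \;-\; \sum_{x \in D_f} (-1)^{\tr(wx)} \;=\; \sum_{x \in \gf(2^m)} (-1)^{\tr(wx)} \;-\; 2 \chi_1(w D_f).
\]
By orthogonality of additive characters, the first sum vanishes whenever $w \neq 0$, so $\chi_1(w D_f) = -\hat{f}(w)/2$. Plugging this back into the weight formula gives $\wt(\bc_w) = (2 n_f + \hat{f}(w))/4$ for every $w \neq 0$, while $w = 0$ yields the zero codeword. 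This already reproduces the multiset in \eqref{eqn-WTBcodes}.

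The more delicate step is to verify that $\dim \C_{D_f} = m$, equivalently that the map $w \mapsto \bc_w$ is injective on $\gf(2^m)$. Its kernel consists of those $w$ with $\tr(w d) = 0$ for every $d \in D_f$, which by the identity above equals $\{0\} \cup \{w \neq 0 : \hat{f}(w) = -2 n_f\}$. The classical bound $|\hat{f}(w)| \le 2^m$ is attained precisely when $f(x) + \tr(wx)$ is constant on $\gf(2^m)$, that is, when $f$ is an affine function. I would use the non-affineness of $f$ to exclude the extremal equation $\hat{f}(w) = -2 n_f$ at every nonzero $w$, forcing the kernel to be trivial.

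The main obstacle will be making that last implication precise. The relation $\hat{f}(w) = -2 n_f$ says $D_f \subseteq \{y : \tr(wy) = 0\}$, so one must deduce from the presence of $D_f$ in a hyperplane, together with the non-affineness hypothesis, a contradiction; this likely requires either the Parseval identity $\sum_w \hat{f}(w)^2 = 2^{2m}$ or a direct structural observation that the support of a non-affine Boolean function cannot force the extremal Walsh value $|\hat{f}(w)| = 2 n_f$ at a nonzero frequency. Once this is secured, the injectivity yields exactly $2^m$ distinct codewords with weights distributed as in \eqref{eqn-WTBcodes}, the weight $0$ being contributed by $w = 0$.
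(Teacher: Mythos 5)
Your computation of the weights is exactly the paper's argument: split the Walsh sum over $D_f$ and its complement, use orthogonality of the additive characters to get $\chi_1(wD_f)=-\hat{f}(w)/2$ for $w\neq 0$, and substitute into (\ref{eqn-weight}) with $p=2$ to obtain $\wt(\bc_w)=(2n_f+\hat{f}(w))/4$. That part is complete and correct.

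The gap is the dimension claim, and you have located it precisely but not closed it --- and in the generality in which the theorem is stated it \emph{cannot} be closed. As you note, $\wt(\bc_w)=0$ for $w\neq 0$ exactly when $\hat{f}(w)=-2n_f$, i.e.\ when $D_f$ lies in the hyperplane $\{y:\tr(wy)=0\}$, and non-affineness of $f$ does not preclude this. Concretely, take $m=3$ and let $f$ be the indicator function of $D_f=\{\alpha,\alpha^2\}$, where $\alpha$ is a root of $x^3+x+1$; then $\tr(\alpha)=\tr(\alpha^2)=0$, so $\bc_1=\bzero$ and the code has dimension at most $2$, yet $f$ is not affine since $n_f=2$ while affine functions on $\gf(2^3)$ have supports of size $0$, $4$ or $8$. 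So neither Parseval nor a structural argument will rescue the step; also note that the extremal case of $|\hat{f}(w)|\le 2^m$ you invoke is not the relevant condition, since $-2n_f$ is in general far from $\pm 2^m$. For what it is worth, the paper's own proof simply asserts ``since $f$ is not affine, $\wt(\bc_w)>0$ for every nonzero $w$'' with no justification, so you have not overlooked an argument that the paper actually supplies; your hesitation is the correct reaction. The dimension claim does hold for every class to which the theorem is subsequently applied (bent, semibent, $\tr(g)$ for $g$ almost bent), because there $|\hat{f}(w)|\le 2^{(m+1)/2}$ while $2n_f$ is of order $2^m$; if you replace ``$f$ not affine'' by the hypothesis $|\hat{f}(w)|<2n_f$ for all $w\in\gf(2^m)^*$ (or verify it class by class), your proof goes through as written.
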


\begin{proof}
Note that all characters of the group $(\gf(2^m),\,+)$ are of the form 
$$ 
\chi_b(x)=(-1)^{\tr(bx)},  \ b \in \gf(2^m). 
$$

Let $w \in \gf(2^m)^*$. 
It follows from (\ref{eqn-WalshTransform2}) that 
\begin{eqnarray*}
\hat{f}(w) = - \sum_{ x \in D_f} \chi_w(x) + \sum_{ x \in \gf(2^m) \setminus D_f} \chi_w(x) 
               = -2 \sum_{ x \in D_f} \chi_w(x) = -2 \chi_w(D_f). 
\end{eqnarray*} 
It then follows from (\ref{eqn-weight}) that the Hamming weight of the codeword $\bc_w$ of (\ref{eqn-mcodeword}) 
is equal to $(2n_f+\hat{f}(w))/4$. Hence, the weight distribution of $\C_{D_f}$ is given by the multiset in (\ref{eqn-WTBcodes}). 
Since $f$ is not affine, $\bc_w >0$ for every nonzero $w \in \gf(2^m)$. Thus, the dimension of $\C_{D_f}$ is equal to $n_f$. 
This completes the proof.  
\end{proof}

Theorem \ref{thm-BooleanCodes} establishes a connection between Boolean functions and a class of linear codes. 
The determination of the weight distribution of the binary linear code $\C_{D_f}$ is equivalent to that of the Walsh 
spectrum of the Boolean function $f$. When the Boolean function $f$ is selected properly, the code $\C_{D_f}$ has 
only a few weights and may have good parameters. We will demonstrate this in the remainder of this section.

\subsubsection{Linear codes from bent functions} 

A function from $\gf(2^m)$ to $\gf(2)$ is called \emph{bent\index{bent}} if $|\hat{f}(w)|=
2^{m/2}$ for every $w \in \gf(2^m)$. Bent functions exist only for even $m$, and were coined 
by Rothaus in \cite{Rothaus76}.  

It is well known that 
a function $f$ from $\gf(2^m)$ to $\gf(2)$ is bent if and only if $D_f$ is 
a difference set in  $(\gf(2^m),\,+)$ with the following parameters 
\begin{eqnarray}\label{eqn-MenonHadamardPara}
(2^m, \,2^{m-1} \pm 2^{(m-2)/2}, \,2^{m-2} \pm 2^{(m-2)/2}).  
\end{eqnarray} 

Let $f$ be bent. Then by definition $\hat{f}(0)=\pm 2^{m/2}$. It then follows that 
\begin{eqnarray}\label{eqn-bentfuncsupportsize}
n_f=|D_f|=2^{m-1} \pm 2^{(m-2)/2}
\end{eqnarray}

\begin{table}[ht]
\begin{center} 
\caption{The weight distribution of the codes of Corollary \ref{thm-bentcodes}}\label{tab-bentfcode}
\begin{tabular}{|c|c|} \hline
Weight $w$ &  Multiplicity $A_w$  \\ \hline  
$0$          &  $1$ \\ \hline 
$\frac{n_f}{2}-2^{\frac{m-4}{2}}$          &  $\frac{2^m-1-n_f2^{-\frac{m-2}{2}}}{2}$ \\ \hline   
$\frac{n_f}{2}+2^{\frac{m-4}{2}}$          &  $\frac{2^m-1+n_f2^{-\frac{m-2}{2}}}{2}$ \\ \hline 
\end{tabular}
\end{center} 
\end{table}

As a corollary of Theorem \ref{thm-BooleanCodes}, we have the following. 

\begin{corollary}\label{thm-bentcodes}
Let $f$ be a bent function from $\gf(2^m)$ to $\gf(2)$ with $f(0)=0$, where $m \geq 4$ and is even. Then $\C_{D_f}$ is an 
$[n_f, \,m, \,(n_f-2^{(m-2)/2})/2]$ 
two-weight binary code with the weight distribution in Table \ref{tab-bentfcode}, where $n_f$ is defined in (\ref{eqn-bentfuncsupportsize}). 
\end{corollary}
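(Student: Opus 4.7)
The plan is to invoke Theorem~\ref{thm-BooleanCodes} directly and then extract the two-weight structure from the defining property of bent functions. First I would observe that any bent function on $\gf(2^m)$ with $m\ge 4$ even is necessarily non-affine: an affine function $\tr(ax)+c$ has $|\hat f(a)|=2^m$, contradicting $|\hat f(w)|=2^{m/2}$ for every $w$. Hence Theorem~\ref{thm-BooleanCodes} applies, and $\C_{D_f}$ is an $[n_f,\,m]$ binary code whose nonzero weights are $\{(2n_f+\hat f(w))/4:w\in\gf(2^m)^*\}$.

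Next, using the bent hypothesis $\hat f(w)=\pm 2^{m/2}$, each weight $(2n_f+\hat f(w))/4$ equals either $n_f/2-2^{(m-4)/2}$ or $n_f/2+2^{(m-4)/2}$. In particular the minimum nonzero weight is $n_f/2-2^{(m-4)/2}=(n_f-2^{(m-2)/2})/2$, giving the claimed minimum distance. So the code has exactly two nonzero weights once we verify both actually occur with the stated multiplicities.

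For the multiplicities, let $N_{\pm}=|\{w\in\gf(2^m):\hat f(w)=\pm 2^{m/2}\}|$, so $N_++N_-=2^m$. The standard Fourier identity
\begin{equation*}
\sum_{w\in\gf(2^m)}\hat f(w)=\sum_{x\in\gf(2^m)}(-1)^{f(x)}\sum_{w}(-1)^{\tr(wx)}=2^m(-1)^{f(0)}=2^m
\end{equation*}
(using $f(0)=0$) yields $2^{m/2}(N_+-N_-)=2^m$, hence $N_+=2^{m-1}+2^{(m-2)/2}$ and $N_-=2^{m-1}-2^{(m-2)/2}$. Since $\hat f(0)=2^m-2n_f$, one checks that $w=0$ contributes to $N_+$ exactly when $n_f=2^{m-1}-2^{(m-2)/2}$ and to $N_-$ exactly when $n_f=2^{m-1}+2^{(m-2)/2}$. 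In either case, subtracting off the contribution of $w=0$ gives
\begin{equation*}
A_{n_f/2-2^{(m-4)/2}}=\frac{2^m-1-n_f\,2^{-(m-2)/2}}{2},\qquad A_{n_f/2+2^{(m-4)/2}}=\frac{2^m-1+n_f\,2^{-(m-2)/2}}{2},
\end{equation*}
which matches Table~\ref{tab-bentfcode}.

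There is no real obstacle here; the only mildly delicate point is the bookkeeping for whether $w=0$ is counted in $N_+$ or $N_-$, which depends on the sign in \eqref{eqn-bentfuncsupportsize}, but the final formula absorbs both cases uniformly through the quantity $n_f\,2^{-(m-2)/2}$.
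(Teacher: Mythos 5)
Your proposal is correct, and the first half (non-affineness of bent functions, applying Theorem~\ref{thm-BooleanCodes}, reading off the two weights $(2n_f\pm 2^{m/2})/4$) coincides with the paper's argument. Where you diverge is in computing the multiplicities: the paper invokes the first two Pless Power Moments, using the observation that $0\notin D_f$ forces the dual code to have minimum weight at least $2$, and then solves the resulting $2\times 2$ linear system for $A_{w_1},A_{w_2}$. You instead count the Walsh spectrum directly via the inversion identity $\sum_{w}\hat f(w)=2^m(-1)^{f(0)}=2^m$, obtaining $N_\pm=2^{m-1}\pm 2^{(m-2)/2}$, and then subtract the contribution of $w=0$ (whose sign is pinned down by $\hat f(0)=2^m-2n_f$). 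Both computations are correct and I verified that your case analysis reproduces exactly the entries of Table~\ref{tab-bentfcode} in both cases $n_f=2^{m-1}\mp 2^{(m-2)/2}$. Your route is more self-contained: it avoids citing the power-moment identities and makes transparent why the hypothesis $f(0)=0$ enters (it fixes the sign of $\sum_w\hat f(w)$), whereas in the paper that hypothesis enters only indirectly through the dual-distance condition. The paper's route is shorter to write down and is the template reused for the semibent and almost bent corollaries, where a third weight appears and a third power moment is needed; your direct spectral count would also generalize there, but would require knowing the full value distribution of $\hat f$ rather than just its first moment.
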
 

\begin{proof}
By Theorem \ref{thm-BooleanCodes}, 
the dimension of the code $\C_{D_f}$ is $m$ as bent functions are not affine. 
It follows from the definition of bent functions and Theorem \ref{thm-BooleanCodes} that $\C_{D_f}$ has nonzero weights 
$(n_f-2^{(m-2)/2})/2$ or $(n_f+2^{(m-2)/2})/2$. 

Let $A_i$ denote the number of codewords with Hamming weight $i$ in $\C_{D_f}$. It is obvious that the dual code 
$\C_{D_f}^\perp$ has minimum weight at least 2. The first two Pless Power Moments \cite[p.260]{HP} lead to the 
following system of equations:  
\begin{eqnarray}\label{eqn-wtdbentfcode}
\left\{ 
\begin{array}{lll}
1 + A_{\frac{n_f}{2}+2^{\frac{m-4}{2}}} +  A_{\frac{n_f}{2}-2^{\frac{m-4}{2}}} &=& 2^m, \\
  \left( \frac{n_f}{2}+2^{\frac{m-4}{2}} \right) A_{\frac{n_f}{2}+2^{\frac{m-4}{2}}} +  
       \left( \frac{n_f}{2}-2^{\frac{m-4}{2}} \right) A_{\frac{n_f}{2}-2^{\frac{m-4}{2}}} &=& n_f2^{m-1}.   
\end{array}
\right. 
\end{eqnarray} 
The desired weight distribution in Table \ref{tab-bentfcode} is obtained by solving (\ref{eqn-wtdbentfcode}). 
\end{proof}

\begin{example} 
Let $m=6$ and let $f$ be a bent function from $\gf(2^6)$ to $\gf(2)$ with $|D_f|=2^{6-1} - 2^{(6-2)/2}=28$. 
Then the code $\C_{D_f}$ has parameters $[28, \,6, \,12]$ and is optimal. 
\end{example}

\begin{example} 
Let $m=8$ and let $f$ be a bent function from $\gf(2^8)$ to $\gf(2)$ with $|D_f|=2^{8-1} - 2^{(8-2)/2}=120$. 
Then the code $\C_{D_f}$ has parameters $[120, \,8, \,56]$, while the optimal binary code has parameters $[120, \,8, \,58]$.  
\end{example} 

There are many constructions of bent functions and thus Hadamard difference sets. We refer the reader to 
\cite{BCHKM}, \cite{Mesnager142}, \cite{Mesnager143},  the book chapter \cite{Carlet} and the references 
therein for details. Any bent function can be plugged into Corollary \ref{thm-bentcodes} to obtain a two-weight linear code. 

\subsubsection{Linear codes from semibent functions} 

Let $m$ be odd. Then there is no bent Boolean function on $\gf(2^m)$.  
A function $f$ from $\gf(2^m)$ to $\gf(2)$ is called \emph{semibent} if $\hat{f}(w) \in \{0, \,
\pm 2^{(m+1)/2}\}$ for every $w \in \gf(2^m)$. 

Let $f$ be a semibent function from $\gf(2^m)$ to $\gf(2)$. It then follows from the definition of semibent functions 
that  
\begin{eqnarray}\label{eqn-semibf}
n_f=|D_f| &=& \left\{ \begin{array}{ll}
                           2^{m-1}-2^{(m-1)/2} & \mbox{ if } \hat{f}(0)=2^{(m+1)/2}, \\
                           2^{m-1}+2^{(m-1)/2} & \mbox{ if } \hat{f}(0)=-2^{(m+1)/2}, \\       
                           2^{m-1}  & \mbox{ if } \hat{f}(0)=0.                                                
\end{array}
\right. 
\end{eqnarray}

\begin{table}[ht]
\begin{center} 
\caption{The weight distribution of the codes of Corollary \ref{thm-semibentcodes}}\label{tab-semibentfcode}
\begin{tabular}{|c|c|} \hline
Weight $w$ &  Multiplicity $A_w$  \\ \hline  
$0$          &  $1$ \\ \hline 
$\frac{n_f-2^{(m-1)/2}}{2}$ & $n_f(2^m-n_f)2^{-m} - n_f2^{-(m+1)/2}$ \\ \hline 
$\frac{n_f}{2}$ & $2^m-1-n_f(2^m-n_f)2^{-(m-1)}$ \\ \hline 
$\frac{n_f+2^{(m-1)/2}}{2}$ & $n_f(2^m-n_f)2^{-m} + n_f2^{-(m+1)/2}$ \\ \hline 
\end{tabular}
\end{center} 
\end{table} 

A semibent function $f$ gives a 2-design. The reader is referred to \cite{DN} for details of the $2$-design. 
We are interested in the coding theory aspect of semibent functions. 

As a corollary of Theorem \ref{thm-BooleanCodes}, we have the following. 

\begin{corollary}\label{thm-semibentcodes}
Let $f$ be a semibent function from $\gf(2^m)$ to $\gf(2)$ with $f(0)=0$, where $m$ is odd. 
Then $\C_{D_f}$ is an $[n_f, \,m, \,(n_f-2^{(m-1)/2})/2]$ 
three-weight binary code with the weight distribution in Table \ref{tab-semibentfcode}, where $n_f$ is defined in (\ref{eqn-semibf}). 
\end{corollary}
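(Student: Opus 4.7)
The plan is to imitate the proof of Corollary \ref{thm-bentcodes}, with one additional multiplicity to pin down. First I would observe that no semibent function is affine, since any affine $f$ satisfies $\hat{f}(w) \in \{0, \pm 2^m\}$, which is disjoint from $\{0, \pm 2^{(m+1)/2}\}$. Theorem \ref{thm-BooleanCodes} therefore applies and shows that $\C_{D_f}$ has dimension $m$ and that the nonzero weights lie in the multiset $\{(2n_f + \hat{f}(w))/4 : w \in \gf(2^m)^*\}$. Substituting the three admissible values of $\hat{f}(w)$ produces exactly the three candidate weights $w_1 = (n_f - 2^{(m-1)/2})/2$, $w_2 = n_f/2$, and $w_3 = (n_f + 2^{(m-1)/2})/2$ listed in Table \ref{tab-semibentfcode}.

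To determine the multiplicities, I would identify $A_{w_1}, A_{w_2}, A_{w_3}$ with the numbers of nonzero $w$ at which $\hat{f}(w)$ takes the values $-2^{(m+1)/2}$, $0$, $+2^{(m+1)/2}$, respectively. Three linear constraints then pin them down: the total count $A_{w_1} + A_{w_2} + A_{w_3} = 2^m - 1$; the inverse Walsh transform at $0$, together with the hypothesis $f(0) = 0$, which gives $\sum_{w} \hat{f}(w) = 2^m$, and hence, using $\hat{f}(0) = 2^m - 2 n_f$, the equation $A_{w_3} - A_{w_1} = n_f / 2^{(m-1)/2}$; and Parseval's identity $\sum_{w} \hat{f}(w)^2 = 2^{2m}$, which after subtracting $\hat{f}(0)^2 = (2^m - 2 n_f)^2$ yields $A_{w_1} + A_{w_3} = n_f(2^m - n_f)/2^{m-1}$. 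Solving this $3 \times 3$ linear system reproduces the three entries of Table \ref{tab-semibentfcode}.

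Equivalently, one could mirror the proof of Theorem \ref{thm-hyperovalDS} and apply the first three Pless Power Moments; this requires only that the dual code $\C_{D_f}^\perp$ have minimum distance at least $3$, which follows from the fact that $D_f$ has no repeated coordinates by the definition of support. The main obstacle is purely bookkeeping: verifying that the solution of the linear system matches the closed forms in the table uniformly across the three regimes of $\hat{f}(0)$ listed in (\ref{eqn-semibf}). No property of $f$ beyond the semibent hypothesis and $f(0) = 0$ is used.
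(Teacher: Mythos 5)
Your proposal is correct and follows essentially the same route as the paper: Theorem \ref{thm-BooleanCodes} supplies the dimension and the three candidate weights, and a $3\times 3$ linear system pins down the multiplicities. Your primary derivation of that system from the inverse Walsh transform at $0$ and Parseval's identity is the first three Pless power moments in disguise (the paper uses the latter directly, exactly as in the alternative you sketch), and both yield the entries of Table \ref{tab-semibentfcode}.
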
 

\begin{proof}
By Theorem \ref{thm-BooleanCodes}, 
the dimension of the code $\C_{D_f}$ is $m$ as semibent functions are not affine. 
It follows from the definition of semibent functions and Theorem \ref{thm-BooleanCodes} that $\C_{D_f}$ has nonzero weights: 
$$ 
w_1=\frac{n_f-2^{(m-1)/2}}{2}, \ w_2=\frac{n_f}{2}, \ w_3=\frac{n_f+2^{(m-1)/2}}{2}.  
$$
We now determine the number $A_{w_i}$ of codewords with weight $w_i$ in $\C_{D_f}$. 
It is straightforward to see that the minimum weight of the dual code $\C_{D_f}^\perp$ is at least $3$.  
The first three Pless Power Moments \cite[p.260]{HP} lead to the following system of equations:  
\begin{eqnarray}\label{eqn-wtdsemibentfcode}
\left\{ 
\begin{array}{lll}
A_{w_1}+A_{w_2}+A_{w_3} &=& 2^m-1, \\
w_1A_{w_1}+w_2A_{w_2}+w_3A_{w_3} &=& n_f 2^{m-1}, \\ 
w_1^2A_{w_1}+w_2^2A_{w_2}+w_3^2A_{w_3} &=& n_f(n_f+1) 2^{m-2}.  
\end{array}
\right. 
\end{eqnarray} 
Solving this system of equations gives the desired weight distribution. 
This completes the proof. 
\end{proof}

\begin{example} 
Let $m=7$ and let $f$ be a semibent function from $\gf(2^7)$ to $\gf(2)$ with $|D_f|=2^{7-1} - 2^{(7-1)/2}=56$. 
Then the code $\C_{D_f}$ has parameters $[56, \,7, \,24]$, while the optimal binary code has parameters $[56, \,7, \,26]$. 
\end{example}

There are a lot of constructions of semibent functions from $\gf(2^m)$ to $\gf(2)$. We refer the reader to 
\cite{CM12,CM14, DQFL,Mesnager11,Mesnager13,Mesnager14} for detailed constructions. All semibent functions can be plugged into Corollary \ref{thm-semibentcodes} to obtain three-weight binary linear codes. 

\subsubsection{Linear codes from almost bent functions}

For any function $g$ from $\gf(2^m)$ to $\gf(2^m)$, we define 
$$ 
\lambda_g(a, b) = \sum_{x \in \gf(2^m)} (-1)^{\tr(ag(x)+bx)}, \ a,\, b \in \gf(2^m).     
$$ 
A function $g$ from $\gf(2^m)$ to $\gf(2^m)$ is called {\em almost bent} if 
$\lambda_g(a, b) = 0, \mbox{ or } \pm 2^{(m+1)/2}$ for every pair $(a, b)$ with $a \neq 0$. 
By definition, almost bent functions over $\gf(2^m)$ exist only for odd $m$.  
Specific almost bent functions are available in \cite{BCP,Carlet}.

By definition,  $\lambda_g(1, 0) \in \{0, \,\pm 2^{(m+1)/2}\}$ for any almost bent function $g$ on $\gf(2^m)$. It is straightforward 
to deduce the following lemma. 

\begin{lemma}\label{lem-absize}
For any almost bent function $g$ from $\gf(2^m)$ to $\gf(2^m)$, define $f=\tr(g)$. Then we have  
\begin{eqnarray}\label{eqn-newnf2}
n_{f}=|D_{\tr(g)}| &=& \left\{ \begin{array}{ll}
                           2^{m-1}+2^{(m-1)/2} & \mbox{ if } \lambda_g(1, 0)=-2^{(m+1)/2}, \\
                           2^{m-1}-2^{(m-1)/2} & \mbox{ if } \lambda_g(1, 0)=2^{(m+1)/2}, \\       
                           2^{m-1}  & \mbox{ if } \lambda_g(1, 0)=0.                                                
\end{array}
\right. 
\end{eqnarray} 
\end{lemma}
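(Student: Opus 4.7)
The plan is to evaluate $\lambda_g(1,0)$ directly from its definition and express the result in terms of $n_f$. By definition,
\[
\lambda_g(1,0) = \sum_{x \in \gf(2^m)} (-1)^{\tr(g(x))}.
\]
Splitting the sum according to whether $\tr(g(x)) = 0$ or $\tr(g(x)) = 1$, and recalling that $D_{\tr(g)} = \{x \in \gf(2^m) : \tr(g(x)) = 1\}$ has size $n_f$, the number of $x$ with $\tr(g(x)) = 0$ is $2^m - n_f$. Hence each term contributes $+1$ or $-1$ accordingly, giving $\lambda_g(1,0) = (2^m - n_f) - n_f = 2^m - 2 n_f$.

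Solving for $n_f$ produces the universal formula $n_f = (2^m - \lambda_g(1,0))/2$. By hypothesis, $g$ is almost bent, so $\lambda_g(1,0) \in \{0, \pm 2^{(m+1)/2}\}$; substituting each of the three permitted values of $\lambda_g(1,0)$ into this formula immediately yields the three cases stated in \eqref{eqn-newnf2}. Specifically, $\lambda_g(1,0) = -2^{(m+1)/2}$ gives $n_f = 2^{m-1} + 2^{(m-1)/2}$; $\lambda_g(1,0) = 2^{(m+1)/2}$ gives $n_f = 2^{m-1} - 2^{(m-1)/2}$; and $\lambda_g(1,0) = 0$ gives $n_f = 2^{m-1}$.

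There is no genuine obstacle in this proof: the almost bent hypothesis is used only to constrain $\lambda_g(1,0)$ to the three distinguished values, and the rest is an elementary character-sum identity relating a $(\pm 1)$-sum to the cardinality of a preimage. The only point requiring any care is to confirm that $1 \neq 0$ in $\gf(2^m)$ so that the pair $(1,0)$ falls under the almost bent condition (which requires $a \neq 0$); this holds trivially.
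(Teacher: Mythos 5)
Your proof is correct and is exactly the computation the paper has in mind: the paper merely notes that the lemma is ``straightforward to deduce'' from $\lambda_g(1,0)\in\{0,\pm 2^{(m+1)/2}\}$, and the identity $\lambda_g(1,0)=2^m-2n_f$ you derive is that deduction. Nothing further is needed.
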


As a corollary of Theorem \ref{thm-BooleanCodes}, we have the following. 

\begin{corollary}\label{thm-abcodes}
Let $g$ be an almost bent function from $\gf(2^m)$ to $\gf(2^m)$ with $\tr(g(0))=0$, where $m$ is odd. Define $f=\tr(g)$. 
Then $\C_{D_{f}}$ is an $[n_f, \,m, \,(n_f-2^{(m-1)/2})/2]$ 
three-weight binary code with the weight distribution in Table \ref{tab-semibentfcode}, where $n_f$ is given in (\ref{eqn-newnf2}). 
\end{corollary}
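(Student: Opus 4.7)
The plan is to deduce Corollary \ref{thm-abcodes} directly from Corollary \ref{thm-semibentcodes} by showing that $f = \tr(g)$ is a semibent Boolean function whenever $g$ is almost bent.

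First I would write out the Walsh transform of $f$ from (\ref{eqn-WalshTransform2}) and compare it with the definition of $\lambda_g$. Since $\tr$ is additive and takes values in $\gf(2)$,
$$
\hat{f}(b) = \sum_{x \in \gf(2^m)} (-1)^{\tr(g(x)) + \tr(bx)} = \sum_{x \in \gf(2^m)} (-1)^{\tr(g(x) + bx)} = \lambda_g(1, b).
$$
The almost-bent hypothesis applied with $a = 1 \ne 0$ then forces $\hat{f}(b) \in \{0, \pm 2^{(m+1)/2}\}$ for every $b \in \gf(2^m)$, which is precisely the semibent property.

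Next I would check the remaining hypotheses of Corollary \ref{thm-semibentcodes}: the assumption $\tr(g(0)) = 0$ gives $f(0) = 0$ for free, and $f$ cannot be affine because an affine Boolean function has a Walsh coefficient of absolute value $2^m$, whereas $|\hat{f}(b)| \leq 2^{(m+1)/2} < 2^m$ here. The identification $\hat{f}(0) = \lambda_g(1, 0)$ also reconciles the three cases of (\ref{eqn-semibf}) with the three cases of (\ref{eqn-newnf2}) recorded in Lemma \ref{lem-absize}, so the value of $n_f$ is consistent in both corollaries.

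With $f$ established as a non-affine semibent function satisfying $f(0) = 0$, Corollary \ref{thm-semibentcodes} applies verbatim and yields the claimed parameters $[n_f, m, (n_f - 2^{(m-1)/2})/2]$ together with the weight distribution recorded in Table \ref{tab-semibentfcode}. There is no real obstacle; the entire content of the argument is the one-line identity $\hat{f}(b) = \lambda_g(1, b)$, which converts the defining property of an almost bent vectorial function into the semibent property of its trace.
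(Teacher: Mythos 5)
Your proof is correct and follows essentially the same route as the paper: both arguments rest on the identity $\hat{f}(b)=\lambda_g(1,b)$, which converts the almost-bent property of $g$ into the semibent spectrum $\{0,\pm 2^{(m+1)/2}\}$ for $f$, and both reuse the frequency computation from the proof of Corollary \ref{thm-semibentcodes}. Your packaging (explicitly noting that $f$ is semibent and invoking that corollary verbatim) is slightly cleaner than the paper's, which re-invokes Theorem \ref{thm-BooleanCodes} directly, but the mathematical content is identical.
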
 

\begin{proof}
By Theorem \ref{thm-BooleanCodes}, 
the dimension of the code $\C_{D_f}$ is $m$ as $f=\tr(g)$ is not affine. 
It follows from the definition of almost bent functions and Theorem \ref{thm-BooleanCodes} that $\C_{D_f}$ has nonzero weights: 
$$ 
\frac{n_f-2^{(m-1)/2}}{2}, \ \frac{n_f}{2}, \ \frac{n_f+2^{(m-1)/2}}{2}. 
$$
It is easy to prove that the dual code $\C_{D_f}$ has minimum weight at least 3. 
The frequencies of the three weights are already determined in the proof of Corollary \ref{thm-semibentcodes}. 
This completes the proof.  
\end{proof}

We remark that the binary code  $\C_{D_f}$ of Corollarey \ref{thm-abcodes} is different from the code 
from almost bent functions defined in \cite{CCZ}, as the dimensions and lengths of the codes are different.  

\subsubsection{Linear codes from quadratic Boolean functions} 

Let 
\begin{eqnarray}\label{eqn-QBFs}
f(x)=\tr_{2^m/2} \left(  \sum_{i=0}^{\lfloor m/2 \rfloor} f_i x^{2^i +1} \right) 
\end{eqnarray}
be a quadratic Boolean function from $\gf(2^m)$ to $\gf(2)$, where $f_i \in \gf(2^m)$. Similarly, the rank of 
$f$, denoted by $r_f$, is defined to be the codimension of the $\gf(2)$-vector space 
$$ 
V_f=\{x \in \gf(2^m): f(x+z)-f(x)-f(z)=0 \ \forall \ z \in \gf(2^m)\}. 
$$
The Walsh spectrum of $f$ is known \cite{CCK} and is given in Table \ref{tab-WalshBBFs}.  

\begin{table}[ht]
\begin{center} 
\caption{The Walsh spectrum of quadratic Boolean functions}\label{tab-WalshBBFs}
\begin{tabular}{|c|c|} \hline
$\hat{f}(w)$ &  the number of $w$'s  \\ \hline  
$0$          &  $2^m-2^{r_f}$ \\ \hline 
$2^{m-r_f/2}$          &  $2^{r_f-1}+2^{(r_f-2)/2}$ \\ \hline 
$-2^{m-r_f/2}$          &  $2^{r_f-1}-2^{(r_f-2)/2}$ \\ \hline 
\end{tabular}
\end{center} 
\end{table} 

Let $D_f$ be the support of $f$. By definition, we have 
\begin{eqnarray}\label{eqn-QBFcodeL}
n_f=|D_f|=2^{m-1}-\frac{\hat{f}(0)}{2} 
=\left\{ 
\begin{array}{ll}
2^{m-1}   & \mbox{ if } \hat{f}(0)=0, \\
2^{m-1}-2^{m-1-r_f/2}   & \mbox{ if } \hat{f}(0)=2^{m-1-r_f/2}, \\
2^{m-1}+2^{m-1-r_f/2}   & \mbox{ if } \hat{f}(0)=-2^{m-1-r_f/2}. 
\end{array}
\right. 
\end{eqnarray}

The following theorem then follows from Theorem \ref{thm-BooleanCodes} and Table \ref{tab-WalshBBFs}. 

\begin{theorem}\label{thm-CodeQBFs}
Let $f$ be a quadratic Boolean function of the form in (\ref{eqn-QBFs}) and $f \neq 0$. Then 
$\C_{D_f}$ is a binary code with length $n_f$ given in (\ref{eqn-QBFcodeL}), dimension $m$, 
and the weight distribution in Table \ref{tab-WEqbfs}, where 
\begin{eqnarray}
(\epsilon_1, \epsilon_1, \epsilon_3)=
\left\{ 
\begin{array}{ll}
(1,0,0)   & \mbox{ if } \hat{f}(0)=0, \\
(0,1,0)   & \mbox{ if } \hat{f}(0)=2^{m-1-r_f/2}, \\
(0,0,1)   & \mbox{ if } \hat{f}(0)=-2^{m-1-r_f/2}. 
\end{array}
\right. 
\end{eqnarray} 
\end{theorem}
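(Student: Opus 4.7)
The plan is to derive the theorem by substituting the Walsh spectrum of Table \ref{tab-WalshBBFs} into the generic weight formula supplied by Theorem \ref{thm-BooleanCodes}. As a preliminary remark, the polarization $B(x,z) = f(x+z) + f(x) + f(z)$ of a Boolean function $f$ with $f(0)=0$ is alternating in characteristic $2$, so the rank $r_f$ of a nonzero quadratic form of the shape (\ref{eqn-QBFs}) is necessarily even and at least $2$. In particular such an $f$ is not affine, which is precisely the hypothesis needed to invoke Theorem \ref{thm-BooleanCodes}. That theorem then immediately gives that $\C_{D_f}$ has dimension $m$, length $n_f$, and that the Hamming weight of $\bc_w$ equals $(2n_f + \hat{f}(w))/4$ for every $w \in \gf(2^m)^*$.

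Next I would feed the three possible values of $\hat{f}(w)$ from Table \ref{tab-WalshBBFs} into this formula. This instantly identifies the three candidate nonzero weights of $\C_{D_f}$:
\[
\frac{n_f}{2} - 2^{m-1-r_f/2}, \qquad \frac{n_f}{2}, \qquad \frac{n_f}{2} + 2^{m-1-r_f/2},
\]
and reduces the remainder of the proof to a counting step: for each of these three weights, count the $w \in \gf(2^m)^*$ giving rise to it. Because Table \ref{tab-WalshBBFs} already enumerates the Walsh spectrum over the whole group $\gf(2^m)$, the multiplicities in Table \ref{tab-WEqbfs} are obtained by subtracting one from the entry that corresponds to the value $\hat{f}(0)$, thereby removing the contribution of $w=0$.

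The main bookkeeping step, and the only place where the argument splits, is coupling the three cases of $\hat{f}(0)$ with the three possible values of $n_f$ dictated by (\ref{eqn-QBFcodeL}). If $\hat{f}(0)=0$ then $n_f = 2^{m-1}$ and the $w=0$ correction removes one occurrence of the weight $n_f/2$, which is recorded by $(\epsilon_1,\epsilon_2,\epsilon_3)=(1,0,0)$; if $\hat{f}(0) = +2^{m-r_f/2}$ then $n_f = 2^{m-1} - 2^{m-1-r_f/2}$ and the $w=0$ correction removes one occurrence of the weight $n_f/2 + 2^{m-1-r_f/2}$, giving $(\epsilon_1,\epsilon_2,\epsilon_3)=(0,1,0)$; finally, if $\hat{f}(0) = -2^{m-r_f/2}$ then $n_f = 2^{m-1} + 2^{m-1-r_f/2}$ and one occurrence of the weight $n_f/2 - 2^{m-1-r_f/2}$ is removed, giving $(\epsilon_1,\epsilon_2,\epsilon_3)=(0,0,1)$. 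The only delicate point to watch is matching the sign convention and the factor-of-two normalization between Table \ref{tab-WalshBBFs} and the formula for $n_f$ in (\ref{eqn-QBFcodeL}); once this alignment is made, no Pless-moment argument is needed, since Table \ref{tab-WalshBBFs} already supplies all frequencies directly.
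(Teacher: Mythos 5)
Your proposal is correct and follows essentially the same route as the paper, which gives no separate proof but derives the theorem by feeding the Walsh spectrum of Table \ref{tab-WalshBBFs} into the weight formula of Theorem \ref{thm-BooleanCodes} and then removing the $w=0$ contribution via the $\epsilon_i$. Note only a small transcription slip in your displayed weights: $(2n_f \pm 2^{m-r_f/2})/4$ equals $\frac{n_f \pm 2^{m-1-r_f/2}}{2} = \frac{n_f}{2} \pm 2^{m-2-r_f/2}$, not $\frac{n_f}{2} \pm 2^{m-1-r_f/2}$ — exactly the factor-of-two normalization you flag, and your final case analysis and multiplicities agree with Table \ref{tab-WEqbfs}.
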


\begin{table}[ht]
\begin{center} 
\caption{The weight distribution of the code $\C_{D_{f}}$ in Theorem \ref{thm-CodeQBFs}}\label{tab-WEqbfs}
\begin{tabular}{|c|c|} \hline
Weight $w$ &  $A_w$  \\ \hline  
$0$          &  $1$ \\ \hline
$\frac{n_f}{2}$          &  $2^m-2^{r_f}-\epsilon_1$ \\ \hline 
$\frac{n_f+2^{m-1-r_f/2}}{2}$          &  $2^{r_f-1}+2^{(r_f-2)/2}-\epsilon_2$ \\ \hline 
$\frac{n_f-2^{m-1-r_f/2}}{2}$          &  $2^{r_f-1}-2^{(r_f-2)/2}-\epsilon_3$ \\ \hline 
\end{tabular}
\end{center} 
\end{table} 

Note that the code $\C_{D_f}$ in Theorem \ref{thm-CodeQBFs} defined by any quadratic Boolean function $f$ is 
different from any subcode of the second-order Reed-muller code, due to the difference in their lengths. The weight 
distributions of the two codes are also different.   

\subsection{A ternary case} 

In this subsection, we analyse a class of ternary codes whose defining sets are  a family of cyclic difference sets, 
which are described in the following theorem \cite{HKM01}.  

\begin{theorem}\label{thm-HDMds}
Let $m=3h \geq 3$ for some positive integer $h$ and $\ell=3^{2h}-3^h+1$. Define $n=(3^m-1)/2$ and 
\begin{eqnarray}\label{eqn-HKMds}
D=\left\{ \alpha^t: \tr_{3^m/3}(\alpha^t + \alpha^{t\ell})=0, \ 0 \leq t \leq n-1  \right\}, 
\end{eqnarray}
where $\alpha$ is a generator of $\gf(3^m)^*$. Then $D$ is a difference set  in $(\gf(3^m)^*/\gf(2)^*, \,\times)$ with the 
following parameters 
\begin{eqnarray}\label{eqn-Singerp=3Param}
\left( \frac{3^m-1}{3-1}, \,\frac{3^{m-1}-1}{3-1}, \,\frac{3^{m-2}-1}{3-1}  \right). 
\end{eqnarray}  
\end{theorem}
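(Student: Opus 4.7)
The natural approach is the character-theoretic criterion for difference sets: a $k$-subset $D$ of an abelian group $G$ of order $v$ is a $(v,k,\lambda)$-difference set if and only if $|\psi(D)|^2=k-\lambda$ for every nontrivial character $\psi$ of $G$, where $\psi(D)=\sum_{d\in D}\psi(d)$. Here $G=\gf(3^m)^*/\gf(2)^*$ is cyclic of order $v=(3^m-1)/2$, and its nontrivial characters lift to the nontrivial \emph{even} multiplicative characters $\psi$ of $\gf(3^m)^*$, i.e., those satisfying $\psi(-1)=1$. A first step is to verify that $D$ is well-defined on cosets of $\{\pm 1\}$: since $\ell=3^{2h}-3^h+1$ is odd, $(-y)^\ell=-y^\ell$, hence $\tr_{3^m/3}(-y+(-y)^\ell)=-\tr_{3^m/3}(y+y^\ell)$, so membership in $D$ is invariant under $y\mapsto -y$. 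One also notes $\gcd(\ell,3^m-1)=1$ (via $\ell(3^h+1)\equiv 1\pmod{3^m-1}$), so $y\mapsto y^\ell$ permutes $\gf(3^m)^*$.

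Write $f(y)=\tr_{3^m/3}(y+y^\ell)$ and detect the condition $f(y)=0$ by additive-character orthogonality on $\gf(3)$, using $\tr_{3^m/3}(a(y+y^\ell))=af(y)$ for $a\in\gf(3)$. This gives
\begin{align*}
2|D| &= |\{y\in\gf(3^m)^*:f(y)=0\}| \\
     &= (3^{m-1}-1)+\frac{1}{3}\sum_{a\in\gf(3)^*}S_a,
\end{align*}
where $S_a=\sum_{y\in\gf(3^m)}\chi_1(a(y+y^\ell))$ is a purely additive exponential sum over $\gf(3^m)$, with $\chi_1$ the canonical additive character. Establishing $|D|=(3^{m-1}-1)/2$ thus amounts to proving $S_1+S_2=0$. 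The same orthogonality, combined with $\sum_{y\neq 0}\psi(y)=0$ for nontrivial $\psi$, yields
\begin{align*}
\psi(D) &= \frac{1}{2}\sum_{\substack{y\in\gf(3^m)^*\\ f(y)=0}}\psi(y) = \frac{1}{6}\sum_{a\in\gf(3)^*}T_a(\psi),
\end{align*}
where $T_a(\psi)=\sum_{y\in\gf(3^m)^*}\psi(y)\chi_1(a(y+y^\ell))$ is a hybrid Gauss-type sum. Verifying the identity $|\psi(D)|^2=3^{m-2}$ then reduces to evaluating these sums and their pairwise products.

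The main obstacle is the precise evaluation of the sums $S_a$ and $T_a(\psi)$. Here the specific arithmetic of the Kasami-type exponent $\ell=3^{2h}-3^h+1$ with $m=3h$ enters essentially: after substituting $y=z^{3^h+1}$ (equivalently $y\mapsto y^{\ell^{-1}}$, using $\ell(3^h+1)\equiv 1\pmod{3^m-1}$), the exponent $y+y^\ell$ simplifies to a polynomial in $z$ that is quadratic in the $3$-adic sense, reducing both sums to Weil-type evaluations controlled by standard Gauss sum identities and the Hasse-Davenport relation. These are precisely the ternary Kasami cross-correlation computations carried out in Helleseth-Kumar-Martinsen \cite{HKM01}. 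Rather than reproduce them, I would perform the character-sum reduction above and then invoke the identities of \cite{HKM01} to obtain both $S_1+S_2=0$ and $|\psi(D)|^2=3^{m-2}$, completing the difference-set verification.
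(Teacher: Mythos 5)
The first thing to note is that the paper contains no proof of this statement: Theorem \ref{thm-HDMds} is imported verbatim from Helleseth--Kumar--Martinsen \cite{HKM01}, so there is no internal argument to measure yours against. Your framework --- Turyn's criterion $|\psi(D)|^2=k-\lambda$ for every nontrivial character $\psi$ of $\gf(3^m)^*/\{\pm 1\}$, realised as the nontrivial even multiplicative characters of $\gf(3^m)^*$ --- is the standard one, and your reductions are essentially correct: $\ell$ odd does make the zero set of $f(y)=\tr_{3^m/3}(y+y^\ell)$ a union of $\{\pm 1\}$-cosets, the identity $2|D|=3^{m-1}-1+\tfrac13(S_1+S_2)$ is right, and $k-\lambda=3^{m-2}$ is the correct target. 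One small slip: $\ell(3^h+1)=3^m+1\equiv 2\pmod{3^m-1}$, not $1$; you still get $\gcd(\ell,3^m-1)=1$ because $\ell$ is odd, but $y\mapsto y^{3^h+1}$ is then $2$-to-$1$ rather than a permutation, so the substitution $y=z^{3^h+1}$ needs a factor-of-two bookkeeping.

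The genuine gap is in the final handoff. The sums your route requires are the hybrid sums $T_a(\psi)=\sum_{y\neq 0}\psi(y)\chi_1(a(y+y^\ell))$ with $\psi$ a nontrivial even \emph{multiplicative} character, and these are not what \cite{HKM01} evaluates: that reference computes the purely additive cross-correlation sums $\sum_{y}\chi_1(cy+dy^\ell)$ arising from the autocorrelation of the ternary sequence $\tr_{3^m/3}(\alpha^t+\alpha^{\ell t})$. Evaluating $T_a(\psi)$ for all even $\psi$ is a genuinely different (Gauss-sum/Stickelberger-type) problem that your sketch does not address. The clean fix is to drop the multiplicative characters and verify the difference-set property by computing the difference function directly: for $u\notin\{0,\,1,\,-1\}$, expand
\begin{equation*}
|\{z\in\gf(3^m)^*:f(z)=0,\ f(uz)=0\}|=\tfrac19\sum_{a,b\in\gf(3)}\ \sum_{z\in\gf(3^m)^*}\chi_1\bigl((a+bu)z+(a+bu^\ell)z^\ell\bigr),
\end{equation*}
whose nontrivial terms are exactly the additive sums evaluated in \cite{HKM01}. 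This is also the style of computation the paper itself uses later (Lemma \ref{lem-July221}) for the related counts $N_{(b,0)}$. As written, your concluding appeal to \cite{HKM01} does not actually cover the sums your argument needs.
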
 

Let $f(x)=\tr_{3^m/3}(x+x^\ell)$, a function from $\gf(3^m)$ to $\gf(3)$. When $h$ is odd, $\{D, -D, \{0\}\}$ forms a partition of the preimage $f^{-1}(0)$.

\begin{table}[ht]
\begin{center} 
\caption{The weight distribution of the codes of Theorem \ref{thm-HKMcodes}}\label{tab-HKMcodes}
\begin{tabular}{|c|c|} \hline
Weight $w$ &  Multiplicity $A_w$  \\ \hline  
$0$          &  $1$ \\ \hline 
$3^{3h-2}-3^{2h-2}$  & $3^{2h}+3^{h}$ \\ \hline 
$3^{3h-2}$  & $3^{3h}-2 \times 3^{2h}-1$ \\ \hline 
$3^{3h-2}+3^{2h-2}$  & $3^{2h}-3^{h}$ \\ \hline 
\end{tabular}
\end{center} 
\end{table} 

Our main result of this section is the following. 

\begin{theorem}\label{thm-HKMcodes} 
Let $h$ be an odd positive integer and let $m=3h$. Let $D$ be defined as in (\ref{eqn-HKMds}). Then the ternary code 
$\C_D$ has parameters 
$$ 
\left[ \frac{3^{3h-1}-1}{2}, \,3h, \,3^{3h-2}-3^{2h-2}\right] 
$$
and the weight distribution of Table \ref{tab-HKMcodes}. 
\end{theorem}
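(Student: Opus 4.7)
The plan is to compute the weight of $\bc_x$ for each $x \in \gf(3^m)^*$ from the character-sum formula \eqref{eqn-weight}. With $p=3$ and $n = |D| = (3^{3h-1}-1)/2$, that formula gives $\wt(\bc_x) = (2n - \chi_1(xD) - \chi_1(-xD))/3$. The key reduction uses the stated partition $\{D,-D,\{0\}\}$ of $f^{-1}(0)$, where $f(u) = \tr_{3^m/3}(u + u^\ell)$: because $D$ and $-D$ are disjoint and neither contains $0$, I get
\[
\chi_1(xD) + \chi_1(-xD) = \chi_1\bigl(xf^{-1}(0)\bigr) - 1.
\]
I then apply additive orthogonality for $(\gf(3),+)$ in the form $3 \cdot \mathbf{1}_{\{f(u)=0\}} = \sum_{s \in \gf(3)} \chi_1(su + su^\ell)$; swapping summations, and using the substitution $u \mapsto 2v$ to handle the $s=2$ term (valid because $2^\ell = 2$, as $\ell$ is odd and $2^2=1$ in $\gf(3)^*$), I obtain
\[
\chi_1\bigl(xf^{-1}(0)\bigr) = \tfrac{1}{3}\bigl(W(x+1) + W(2x+1)\bigr), \quad W(b) := \sum_{v \in \gf(3^m)} \chi_1(bv + v^\ell).
\]
Combining, $\wt(\bc_x) = (3^{3h} - W(x+1) - W(2x+1))/9$ for every nonzero $x$.

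Next, I invoke the Helleseth--Kumar--Martinsen three-valued cross-correlation result underlying Theorem \ref{thm-HDMds} and \cite{HKM01}: for $m=3h$ with $h$ odd and $\ell = 3^{2h}-3^h+1$, the transform $W$ takes values only in $\{0,\pm 3^{2h}\}$, with frequencies pinned down by the Fourier identity $\sum_b W(b) = 3^{3h}$ and the Parseval identity $\sum_b W(b)^2 = 3^{6h}$. Using this, together with the affine link $b_2 = 2b_1 - 1$ between the two arguments and the basepoint values $W(0) = W(1) = 0$ (the first from $v \mapsto v^\ell$ being a permutation of $\gf(3^m)$, the second from the balance of $f$ that follows from the partition hypothesis), I argue that $W(x+1)+W(2x+1)$ takes only values in $\{0,\pm 3^{2h}\}$ as $x$ ranges over $\gf(3^m)^*$. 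This yields the three candidate nonzero weights $3^{3h-2}$ and $3^{3h-2} \pm 3^{2h-2}$ of Table \ref{tab-HKMcodes}, and since none of them vanishes, $x \mapsto \bc_x$ is injective, so $\dim \C_D = 3h$.

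With the three nonzero weights pinned down, I finish exactly as in the proofs of Theorem \ref{thm-hyperovalDS} and Corollary \ref{thm-semibentcodes}: since $0 \notin D$, the dual code $\C_D^\perp$ has minimum distance at least $2$, and the first three Pless power moments give a $3 \times 3$ linear system in the three multiplicities whose unique solution is precisely the distribution in Table \ref{tab-HKMcodes}. The principal obstacle is the middle step: merely knowing that each of $W(x+1)$ and $W(2x+1)$ lies in $\{0,\pm 3^{2h}\}$ does not \emph{a priori} rule out the ``same-sign'' combination $W(x+1) = W(2x+1) = \pm 3^{2h}$, which would produce spurious weights $3^{3h-2} \pm 2\cdot 3^{2h-2}$. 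I plan to settle this either by a direct counting argument on the pairs $\{(b, 2b-1) : b \in \gf(3^m)\}$ exploiting the known HKM value distribution of $W$, or, as a fallback, by feeding all five potential weights into sufficiently many Pless power moments to force the two extraneous multiplicities to vanish.
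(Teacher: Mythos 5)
Your reduction is sound and closely parallels the paper's: writing $\wt(\bc_x)=(3^{3h}-W(x+1)-W(2x+1))/9$ is essentially the paper's expansion of $N_{(b,0)}$ by double orthogonality (equation (\ref{eqn-8terms})), with your $W(x+1)$ and $W(2x+1)$ corresponding to the two exponential sums $\sum_y\epsilon_3^{\pm\tr((1\pm b)y^{e+1}+y^2)}$ after the substitution $v=y^{e+1}$ (which is $2$-to-$1$ and turns $v^\ell$ into $y^2$, since $\ell(e+1)\equiv 2 \pmod{3^m-1}$). But you have correctly located, and not closed, the one genuinely hard step: ruling out the same-sign case $W(x+1)=W(2x+1)=\pm 3^{2h}$. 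This is precisely where the paper invests its main effort. Lemma \ref{lem-DingCS} proves that for every $u$ at least one of the two quadratic forms $Q_u(y)=\tr(uy^{e+1}+y^2)$ and $Q_{-1-u}(y)$ has full rank $m=3h$ (which is odd, so its signed Gauss-sum contribution vanishes by Lemma \ref{lem-ZhouDing}); the proof is a nontrivial algebraic argument producing, from the contrary assumption, an identity $P(y_1)=-P(y_2)$ where $P(x)=(x^{e^2}+x^e-x)^{e+1}/x^{e+1}$ is always a nonzero square while $-1$ is a nonsquare. Nothing of this sort is implied by the single-variable value distribution of $W$: knowing that each of $W(b)$ and $W(2b-1)$ lies in $\{0,\pm 3^{2h}\}$, together with the first and second moment identities you cite, says nothing about the \emph{joint} distribution along the affine correspondence $b\mapsto 2b-1$, so your ``direct counting argument on the pairs'' has no visible route to a proof.

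Your fallback is also not workable as stated. With five candidate weights you would need the first five Pless power moments, and the $i$th moment involves the dual weight coefficients $B_1,\dots,B_i$; you can get $B_1=B_2=0$ cheaply (indeed $d^\perp\ge 3$, not merely $\ge 2$ as you wrote, because $D\cap(-D)=\emptyset$ — and note you do need $d^\perp\ge 3$ even for the three-moment system you plan to use at the end), but $B_3$ and $B_4$ are not known a priori, so the $5\times 5$ system cannot be set up. In short, the skeleton of your argument matches the paper, but the theorem's content lives in the excluded same-sign case, and you would need to supply something equivalent to Lemma \ref{lem-DingCS} (via the quadratic-form ranks of $Q_{1+b}$ and $Q_{1-b}$) to complete the proof.
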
 

We remark that the code $\C_D$ of Theorem \ref{thm-HKMcodes} has more than three nonzero weights if $h$ is 
even. To prove this theorem, we need to introduce the basics of quadratic forms and prove several lemmas first. 

A {\em quadratic form} $f(X)$ in $m$ variables over $\gf(p)$ is a homogeneous polynomial in $\gf(p)[x_1, \ldots, x_m]$ 
of degree $2$ and can be expressed as 
$$ 
f(X)=\sum_{1 \leq i \leq j \leq m} a_{i,j} x_i x_j, \ \ a_{i,j} \in \gf(p), 
$$
where $X=(x_1, x_2, \ldots, x_m)$. A quadratic form $f(X)$ in $m$ variables over $\gf(p)$ may also be expresses in 
the trace form 
$$ 
f(x)=\tr\left(\sum_{0 \leq i \leq j \leq m-1} x^{p^i + p^j}\right), \ \ x \in \gf(p^m), 
$$
where $\tr$ denotes the trace function from $\gf(p^m)$ to $\gf(p)$. 

The {\em rank} $r_f$ of a quadratic form $f(x)$ over $\gf(p^m)$ is defined as the codimension of the $\gf(p)$-vector space 
$$ 
V_f=\{x \in \gf(p^m): f(x+y)-f(x)-f(y)=0 \mbox{ for all } y \in \gf(p^m)\},
$$ 
i.e., $r_f=m-\dim(V_f)$. 

We shall use the following lemma in the sequel \cite{ZD13}.  

\begin{lemma}\label{lem-ZhouDing} 
Let $f(x)$ be a quadratic form of rank $r_f$ over $\gf(p^m)$, and let $\epsilon_p=e^{2\pi \sqrt{-1}/p}$. Then 
\begin{eqnarray*}
\sum_{y \in \gf(p)^*} \sum_{x \in \gf(p^m)} \epsilon_p^{yf(x)}  = 
\left\{ \begin{array}{ll}
\pm (p-1) p^{m-r_f/2}  & \mbox{ if $r_f$ is even,} \\
0                                  & \mbox{ otherwise.} 
\end{array} 
\right.
\end{eqnarray*}
\end{lemma}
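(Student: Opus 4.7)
The plan is to reduce the quadratic form to a diagonal canonical form over $\gf(p)$, then factor the inner character sum into a product of one-variable quadratic Gauss sums, and finally sum over $y \in \gf(p)^*$.

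First I would use the standard classification of quadratic forms over the finite field $\gf(p)$ (for odd $p$): since the radical $V_f$ has codimension $r_f$, a linear change of coordinates over $\gf(p)$ transforms $f$ into
\[
\tilde f(x_1,\ldots,x_m) \;=\; a_1 x_1^2 + a_2 x_2^2 + \cdots + a_{r_f} x_{r_f}^2,
\]
with $a_i \in \gf(p)^*$, and leaves the character sum $\sum_{x \in \gf(p^m)} \epsilon_p^{y f(x)}$ unchanged. I would identify $\gf(p^m)$ with $\gf(p)^m$ through a $\gf(p)$-basis so that the trace representation of $f$ matches a polynomial in $\gf(p)[x_1,\ldots,x_m]$. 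The $m - r_f$ variables in the radical contribute a factor of $p^{m-r_f}$, giving
\[
\sum_{x \in \gf(p^m)} \epsilon_p^{y f(x)} \;=\; p^{m-r_f} \prod_{i=1}^{r_f} \sum_{t \in \gf(p)} \epsilon_p^{y a_i t^2}.
\]

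Next I would recognise each inner sum as a quadratic Gauss sum. Letting $\eta$ be the quadratic multiplicative character of $\gf(p)$, a standard identity gives $\sum_{t \in \gf(p)} \epsilon_p^{c t^2} = \eta(c) G(\eta,\chi_1)$ for $c \in \gf(p)^*$, where $G(\eta,\chi_1)^2 = \eta(-1) p = (-1)^{(p-1)/2} p$. Multiplying over $i = 1,\ldots,r_f$ yields
\[
\sum_{x \in \gf(p^m)} \epsilon_p^{y f(x)} \;=\; p^{m-r_f}\, \eta(y)^{r_f}\, \eta(a_1 a_2 \cdots a_{r_f})\, G(\eta,\chi_1)^{r_f}.
\]

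Now I would split on the parity of $r_f$ and sum over $y \in \gf(p)^*$. When $r_f$ is odd, $\eta(y)^{r_f} = \eta(y)$ and the orthogonality relation $\sum_{y \in \gf(p)^*} \eta(y) = 0$ forces the total sum to vanish. When $r_f$ is even, $\eta(y)^{r_f} = 1$, and $G(\eta,\chi_1)^{r_f} = \bigl((-1)^{(p-1)/2} p\bigr)^{r_f/2} = \pm p^{r_f/2}$; summing over the $p-1$ values of $y$ and collecting powers of $p$ gives the announced value $\pm (p-1) p^{m - r_f/2}$, the sign being determined by $\eta(a_1 \cdots a_{r_f})$ together with the parity of $(p-1)r_f/4$.

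The only subtle step is the invariance of the character sum under a $\gf(p)$-linear change of variables and the matching of the two definitions of rank (via trace-polynomial expression versus via the radical $V_f$); everything else is bookkeeping with the quadratic Gauss sum. I would record invariance as a short preliminary remark and then carry out the computation as above.
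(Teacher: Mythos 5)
Your proof is correct. Note first that the paper does not actually prove this lemma: it is imported verbatim from the reference \cite{ZD13} with the sentence ``We shall use the following lemma in the sequel,'' so there is no in-paper argument to compare against. Your route --- diagonalise $f$ over $\gf(p)$ as $a_1x_1^2+\cdots+a_{r_f}x_{r_f}^2$, let the $m-r_f$ radical coordinates contribute $p^{m-r_f}$, evaluate each remaining coordinate as a quadratic Gauss sum $\eta(ya_i)G(\eta,\chi_1)$ with $G(\eta,\chi_1)^2=\eta(-1)p$, and then split on the parity of $r_f$ using $\sum_{y\in\gf(p)^*}\eta(y)=0$ versus $\eta(y)^{r_f}=1$ --- is the standard derivation and yields exactly the claimed values, including the undetermined sign $\pm$ (governed by $\eta(a_1\cdots a_{r_f})$ and $\eta(-1)^{r_f/2}$), which is all the lemma asserts. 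The two points you flag are indeed the only ones that deserve a written remark: the rank defined as the codimension of $V_f$ coincides with the number of nonzero diagonal coefficients because $p$ is odd, so the polarised bilinear form $B(x,z)=f(x+z)-f(x)-f(z)$ has radical exactly $V_f$ and determines $f$; and $f$ vanishes identically on $V_f$ (from $B(x,x)=2f(x)=0$ for $x\in V_f$), which is what justifies the clean factor $p^{m-r_f}$ rather than a twisted character sum over the radical.
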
 

Starting from now on, we put $p=3$, $m=3h$ and $e=3^h$, where $h$ is odd. We consider the following quadratic form 
$$ 
Q_u(x)=\tr(u x^{e+1} + x^2) 
$$
over $\gf(3^m)$. 
It is easily seen that 
\begin{eqnarray}\label{eqn-bilinear1}
Q_u(y+z)-Q_u(y)-Q_u(z)=\tr((u^{e^2} y^{e^2} + u y^e -y)z). 
\end{eqnarray}
By definition, the rank $r_{Q_u}=m-\log_3(n_u)$, where $n_u$ is the number of solutions $y \in \gf(3^m)$ of the following 
equation 
\begin{equation}\label{eqn-lineareqn}
u^{e^2} y^{e^2} + u y^e -y=0. 
\end{equation}  

The following lemma is proved in \cite{LF08}. A direct proof discussing the number of solutions $n_u$ of 
(\ref{eqn-lineareqn}) can also be given in a straightforward way. 

\begin{lemma}\label{lem-LF08} 
The rank $r_{Q_u}$ of the quadratic form $Q_u$ is $m$, or $m-h$, or $m-2h$. 
\end{lemma}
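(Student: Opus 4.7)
The plan is to exploit the bilinear identity (\ref{eqn-bilinear1}) and the non-degeneracy of the trace form. Setting $L(y) := u^{e^2}y^{e^2} + uy^e - y$, the identity reads $Q_u(y+z)-Q_u(y)-Q_u(z) = \tr(L(y)\,z)$, so $V_{Q_u} = \ker L$ and hence $r_{Q_u} = m - \dim_{\gf(3)} \ker L$. The task thus reduces to showing that $\dim_{\gf(3)} \ker L \in \{0, h, 2h\}$.

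The key observation is that although $L$ is a priori only $\gf(3)$-linear, it is in fact $\gf(3^h)$-linear: the Frobenius twists appearing in $L$ are $(\cdot)^e$ and $(\cdot)^{e^2}$ with $e = 3^h$, and both fix $\gf(3^h)$ pointwise, so that $L(\alpha y) = \alpha L(y)$ for every $\alpha \in \gf(3^h)$. Hence $\ker L$ is a $\gf(3^h)$-subspace of $\gf(3^m)$, which, since $m = 3h$, is $3$-dimensional over $\gf(3^h)$. This forces $\dim_{\gf(3^h)} \ker L \in \{0, 1, 2, 3\}$, equivalently $\dim_{\gf(3)} \ker L \in \{0, h, 2h, 3h\}$.

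It remains to rule out $\dim_{\gf(3)} \ker L = 3h$, which would require $L$ to vanish on all of $\gf(3^m)$. But viewed as a polynomial, $L$ has coefficient $-1$ at $y$, so it is a nonzero polynomial of degree at most $e^2 = 3^{2h}$; hence it has at most $3^{2h} < 3^{3h} = |\gf(3^m)|$ roots, a contradiction. We conclude $r_{Q_u} \in \{m, m-h, m-2h\}$. I do not foresee a real obstacle here; the crux is spotting the $\gf(3^h)$-linearity of $L$, which is forced by the shape of $L$ together with the hypothesis $m = 3h$ (which provides exactly the right dimension count to yield the stated trichotomy).
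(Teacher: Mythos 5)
Your proof is correct, and it is exactly the ``direct proof discussing the number of solutions $n_u$ of (\ref{eqn-lineareqn})'' that the paper says can be given in a straightforward way: the paper itself supplies no argument, deferring to \cite{LF08}, while your observation that $L(y)=u^{e^2}y^{e^2}+uy^e-y$ is an $e$-linearized (hence $\gf(3^h)$-linear) map whose kernel is therefore a proper $\gf(3^h)$-subspace of the $3$-dimensional space $\gf(3^{3h})$ is the standard route and is carried out without gaps, including the degree bound that excludes the full kernel. No issues.
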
 

We will need the following lemma later. 

\begin{lemma}\label{lem-rankQ1}
The quadratic form $Q_1(y)$ over $\gf(3^m)$ has rank $m=3h$. 
\end{lemma}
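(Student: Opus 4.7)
The plan is to leverage the reduction, already stated in the paragraph introducing $Q_u$, that $r_{Q_1} = m - \log_3(n_1)$, where $n_1$ is the number of $\gf(3^m)$-solutions of the $\gf(3)$-linear equation
\[
L(y) := y^{e^2} + y^e - y = 0
\]
(equation~(\ref{eqn-lineareqn}) specialized to $u = 1$). Since the solution set is a $\gf(3)$-subspace of $\gf(3^m)$, it suffices to show that $L(y) = 0$ forces $y = 0$; this will give $n_1 = 1$ and hence $r_{Q_1} = m = 3h$.

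The key observation is that $e = 3^h$ and $m = 3h$ together force $y^{e^3} = y^{3^{3h}} = y$ for every $y \in \gf(3^m)$, so the Frobenius map $\sigma\colon y \mapsto y^e$ has order dividing $3$ on $\gf(3^m)$. I would therefore raise the identity $L(y) = 0$ to the $e$-th power to produce a second $\gf(3)$-linear relation among $y$, $y^e$, and $y^{e^2}$, namely
\[
y + y^{e^2} - y^e = 0,
\]
and then add it to $L(y) = 0$. The $y$ and $y^e$ terms cancel, leaving $2 y^{e^2} = 0$; since $2$ is a unit in $\gf(3)$, this forces $y^{e^2} = 0$, whence $y = 0$.

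There is no real obstacle here: the argument is a one-step Frobenius-averaging trick that succeeds precisely because $\sigma^3 = \mathrm{id}$ on $\gf(3^m)$, which is exactly the content of the hypothesis $m = 3h$. I note that the hypothesis "$h$ odd" plays no role in this particular lemma; it is needed only later in Theorem~\ref{thm-HKMcodes}, where the parity of $r_{Q_u}$ is fed into Lemma~\ref{lem-ZhouDing} to produce the three-weight enumerator.
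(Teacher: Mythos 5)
Your proof is correct and is essentially identical to the paper's: the paper also raises $y^{e^2}+y^e-y=0$ to the $e$-th power, uses $y^{e^3}=y$ to get $y+y^{e^2}-y^e=0$, and adds the two relations to conclude $y^{e^2}=0$, hence $y=0$. Your remark that the parity of $h$ is irrelevant to this lemma is also accurate.
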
 

\begin{proof}
It suffices to prove that the equation 
$$ 
y^{e^2} + y^e -y=0 
$$ 
has the only solution $y=0$ in $\gf(3^m)$. Raising both sides of this equation to the power of $e$, we obtain 
$$ 
y+y^{e^2}-y^e=0. 
$$
Adding the two equations above yields $y^{e^2}=0$. Hence, $y=0$. This completes the proof. 
\end{proof}

We shall employ the following lemma whose proof is straightforward and may be found in \cite{LF08}. 

\begin{lemma}\label{lem-LF081}
The rank of the quadratic form $\tr(bx^{e+1})$ is $m$ for all $b \in \gf(3^m)^*$. 
\end{lemma}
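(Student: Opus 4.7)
My plan is to show $V_f = \{0\}$, which immediately yields $\rank(f) = m - \dim_{\gf(3)} V_f = m$. First, I would compute the polarization of $f(x) = \tr(bx^{e+1})$ explicitly. Expanding $(x+z)^{e+1}$ produces the cross terms $b(xz^e + x^e z)$, and the standard trick here is to invoke Frobenius-invariance of the trace together with $z^{e^3} = z^{3^m} = z$ to rewrite $\tr(bxz^e) = \tr(b^{e^2}x^{e^2}z)$. This collapses the bilinear form into a linear pairing
\begin{eqnarray*}
f(x+z) - f(x) - f(z) = \tr\left((bx^e + b^{e^2}x^{e^2})\,z\right),
\end{eqnarray*}
so $V_f$ is exactly the set of $x \in \gf(3^m)$ satisfying the linearised equation $bx^e + b^{e^2}x^{e^2} = 0$.

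Next, I would apply Frobenius twice more to this equation in order to obtain a system in the three quantities $x$, $x^e$, $x^{e^2}$. Raising the original relation to the $e$-th power and simplifying with $b^{e^3}=b$, $x^{e^3}=x$ yields $bx + b^e x^{e^2} = 0$; a further $e$-th power yields $b^{e^2} x + b^e x^e = 0$. Together with the original relation, these three equations form a $3 \times 3$ homogeneous linear system over $\gf(3^m)$ in the unknowns $(x, x^e, x^{e^2})$ with coefficient matrix
\begin{eqnarray*}
M = \begin{pmatrix} 0 & b & b^{e^2} \\ b & 0 & b^e \\ b^{e^2} & b^e & 0 \end{pmatrix}.
\end{eqnarray*}

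The decisive step is the determinant computation. A direct cofactor expansion gives $\det M = 2\,b^{1+e+e^2}$, which is nonzero because $b \neq 0$ and $2 \neq 0$ in characteristic $3$. Hence the system forces $(x, x^e, x^{e^2}) = (0, 0, 0)$, so $x = 0$ and therefore $V_f = \{0\}$.

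The main obstacle I anticipate is exactly this last determinant step: it is crucial that the two off-diagonal contributions add with the same sign to produce $2\,b^{1+e+e^2}$ rather than cancel to $0$, and it is precisely the characteristic-$3$ hypothesis on $p$ that rules out the degenerate case. This mirrors, in a slightly more algebraic form, the doubling argument used in the proof of Lemma \ref{lem-rankQ1}; without the assumption $p=3$ (so that $2 \neq 0$) the argument would fail.
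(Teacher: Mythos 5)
Your proof is correct, but it is worth noting that the paper itself gives no argument for this lemma at all: it simply declares the proof ``straightforward'' and defers to the reference \cite{LF08}. Your write-up therefore supplies the missing details rather than paralleling an existing proof. The polarization identity $f(x+z)-f(x)-f(z)=\tr\bigl((bx^e+b^{e^2}x^{e^2})z\bigr)$ is right (using $\tr(bxz^e)=\tr(b^{e^2}x^{e^2}z)$ and $z^{e^3}=z$), and by non-degeneracy of the trace pairing $V_f$ is exactly the solution set of $bx^e+b^{e^2}x^{e^2}=0$. Generating the two Frobenius conjugates and passing to the $3\times 3$ system in $(x,x^e,x^{e^2})$ is precisely the technique the paper uses elsewhere (compare the system (\ref{eqn-X2}) in the proof of Lemma \ref{lem-DingCS}), and the determinant $\det M=2b^{1+e+e^2}\neq 0$ is computed correctly; the logic that any $x\in V_f$ yields a solution of the nonsingular homogeneous system, hence $x=0$ and $r_f=m-\dim V_f=m$, is sound. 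Two minor remarks: the nonvanishing of $\det M$ really only needs $p$ odd rather than $p=3$ specifically (though $p=3$ is fixed throughout this section, so nothing is lost), and your approach is in fact slightly cleaner than the ad hoc ``add the two conjugate equations'' trick used in the paper's proof of Lemma \ref{lem-rankQ1}, since the same determinant criterion would dispose of that lemma as well.
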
 

The next lemma will play an important role in determining the weight distribution of the code of Theorem 
\ref{thm-HKMcodes}. 

\begin{lemma}\label{lem-DingCS}
For any $u \in \gf(3^m)$, at least one of the two quadratic forms $Q_u(y)$ and $Q_{-1-u}(y)$ over $\gf(3^m)$ 
has rank $m$, where $m=3h$ and $h$ is odd. 
\end{lemma}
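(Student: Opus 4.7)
The plan is as follows. First, compute $V_{Q_{-1-u}}$ explicitly. Since $e^2 = 3^{2h}$ is odd and we are in characteristic $3$, the Frobenius identity $(-1-u)^{e^2} = -1 - u^{e^2}$ holds, and expanding the bilinear form for $Q_{-1-u}$ exactly as in (\ref{eqn-bilinear1}) yields
\begin{equation*}
V_{Q_{-1-u}} = \{y \in \gf(3^m) : (1+u^{e^2})y^{e^2} + (1+u)y^e + y = 0\}.
\end{equation*}

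Second, suppose $y \in V_{Q_u} \cap V_{Q_{-1-u}}$. Subtracting the defining equation $u^{e^2} y^{e^2} + u y^e - y = 0$ of $V_{Q_u}$ from the one just obtained for $V_{Q_{-1-u}}$, all terms containing $u$ cancel and only $y^{e^2} + y^e + 2y = 0$ remains. In characteristic $3$ this is $y^{e^2} + y^e - y = 0$, which is precisely the equation defining $V_{Q_1}$; by Lemma~\ref{lem-rankQ1} it forces $y = 0$. Hence $V_{Q_u} \cap V_{Q_{-1-u}} = \{0\}$.

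Third, observe that $V_{Q_u}$ and $V_{Q_{-1-u}}$ are $\gf(3^h)$-subspaces of $\gf(3^m)$, since their defining equations are $\gf(3^h)$-linear (as $a^e = a^{e^2} = a$ for every $a \in \gf(3^h)$). Because $\gf(3^m)$ has $\gf(3^h)$-dimension $3$ and the two radicals meet only at $0$, the sum of their $\gf(3^h)$-dimensions is at most $3$. Lemma~\ref{lem-LF08} restricts each individual dimension to $\{0,1,2\}$. If both $Q_u$ and $Q_{-1-u}$ had rank less than $m$, both dimensions would be at least $1$, and the case in which both equal $2$ is already impossible by this bound.

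The main obstacle is to eliminate the remaining configurations $(1,1)$, $(1,2)$, and $(2,1)$. To handle them I would work in the twisted polynomial ring $\gf(3^m)\langle\sigma\rangle/(\sigma^3-1)$, where $\sigma$ is the Frobenius $y \mapsto y^e$. Writing $L_u := u^{e^2}\sigma^2 + u\sigma - 1$, so that $\ker L_u = V_{Q_u}$, the identity $Q_u + Q_{-1-u} = -Q_1$ lifts to $L_u + L_{-1-u} = -L_1$; and because $(\sigma-1)^3 = \sigma^3 - 1 = 0$ in characteristic $3$, one has $L_1 = 1 + (\sigma-1)^2$, which is a unit by Lemma~\ref{lem-rankQ1}. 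The hardest step is to combine this nilpotent factorisation with the $(\sigma-1)$-filtration $0 \subset \gf(3^h) \subset \ker(\sigma^2 + \sigma + 1) \subset \gf(3^m)$ to argue that $L_u$ and $L_{-1-u}$ cannot simultaneously be non-units, which would complete the contradiction.
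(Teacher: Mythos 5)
Your opening reductions are correct: the radical of $Q_{-1-u}$ is indeed cut out by $(1+u^{e^2})y^{e^2}+(1+u)y^e+y=0$, subtracting the defining equation of $V_{Q_u}$ does leave $y^{e^2}+y^e-y=0$, and Lemma~\ref{lem-rankQ1} then gives $V_{Q_u}\cap V_{Q_{-1-u}}=\{0\}$; the observation that both radicals are $\gf(3^h)$-subspaces of the three-dimensional $\gf(3^h)$-space $\gf(3^m)$ is also sound and correctly rules out the case in which both forms have rank $m-2h$. But at that point the argument stops being a proof. The configurations $(1,1)$, $(1,2)$, $(2,1)$ are precisely the generic danger cases, and the twisted-polynomial-ring plan you sketch for them does not close the gap: from $L_u+L_{-1-u}=-L_1$ with $L_1$ a unit one cannot conclude that $L_u$ and $L_{-1-u}$ are not both non-units. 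The skew group ring $\gf(3^m)\langle\sigma\rangle/(\sigma^3-1)$ you propose to work in is isomorphic to the full matrix ring $M_3(\gf(3^h))$, where two singular elements routinely sum to an invertible one (e.g.\ complementary idempotents summing to the identity), so the ``nilpotent factorisation plus filtration'' step is not a minor technicality but the entire content of the lemma, and no argument for it is given. A telling symptom is that nothing you have actually established uses the hypothesis that $h$ (hence $m=3h$) is odd, yet that hypothesis must enter somewhere.

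The paper closes these cases by a completely different, arithmetic device. Assuming both ranks are less than $m$, it picks nonzero $y_1\in V_{Q_u}$ and $y_2\in V_{Q_{-1-u}}$, applies the Frobenius $y\mapsto y^e$ to each defining equation to obtain a linear system in $(y_i,y_i^e,y_i^{e^2})$, and solves for $u$, getting $u=y_1^{2e^2-e-1}-y_1^{e-1}-y_1^{1-e}$ from the first and $u=-y_2^{2e^2-e-1}+y_2^{e-1}+y_2^{1-e}-1$ from the second. Equating these yields $P(y_1)=-P(y_2)$, where
$$
P(x)=x^{2e^2-e-1}-x^{e-1}-x^{1-e}-1=\frac{(x^{e^2}+x^e-x)^{e+1}}{x^{e+1}}
$$
is a nonzero square for every $x\in\gf(3^m)^*$ (nonzero by the same computation as in Lemma~\ref{lem-rankQ1}, and a square because the exponent $e+1=3^h+1$ is even). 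Since $m$ is odd, $-1$ is a nonsquare in $\gf(3^m)$, which is the desired contradiction; this is exactly where the parity of $h$ is consumed. As written, your proposal establishes only a fragment of the lemma, and the remaining cases would require either this explicit elimination of $u$ or a genuinely new structural argument that you have not supplied.
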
 

\begin{proof}
Suppose on the contrary that both $Q_u(y)$ and $Q_{-1-u}(y)$ have rank less than $m$ for some 
$u \in \gf(3^m)$. Then there would exist $y_1 \in \gf(3^m)^*$ and $y_2 \in \gf(3^m)^*$ such that 
\begin{eqnarray}\label{eqn-X1}
\left\{ \begin{array}{r}
u^{e^2} y_1^{e^2} + u y_1^e - y_1 = 0, \\
(1+u)^{e^2} y_2^{e^2} + (1+u) y_2^e + y_2 = 0,   
\end{array}
\right. 
\end{eqnarray}
where $e=3^h$. 

Note that $y^{e^3}=y$ for all $y \in \gf(3^m)$. Raising the first equation of (\ref{eqn-X1}) to the power of 
$e^0$, $e^1$ and $e^2$ yields the following system of three equations 
\begin{eqnarray}\label{eqn-X2}
\left\{ \begin{array}{rrr}
u^{e^2} y_1^{e^2} + u y_1^e - y_1 &=& 0, \\
u y_1 + u^e y_1^{e^2} - y_1^e  &=& 0, \\ 
u^e y_1^e + u^{e^2} y_1 - y_1^{e^2} &=& 0.   
\end{array}
\right. 
\end{eqnarray}
Solving (\ref{eqn-X2}) gives 
\begin{eqnarray}\label{eqn-X3}
u=y_1^{2e^2-e-1}-y_1^{e-1}-y_1^{1-e}. 
\end{eqnarray}

Using the second equation of (\ref{eqn-X1}) in a similar way, we obtain 
\begin{eqnarray}\label{eqn-X4}
u=-y_2^{2e^2-e-1}+y_2^{e-1}+y_2^{1-e}-1. 
\end{eqnarray} 

We now define 
$$ 
P(x)=x^{2e^2-e-1}-x^{e-1}-x^{1-e}-1 \in \gf(3^m)[x]. 
$$
Combining (\ref{eqn-X3}) and (\ref{eqn-X4}), we arrive at 
\begin{eqnarray}\label{eqn-X5}
P(y_1)=-P(y_2). 
\end{eqnarray}

It is straightforward to verify that 
$$ 
P(x)=\frac{(x^{e^2}+x^e-x)^{e+1}}{x^{e+1}}. 
$$
In addition, we have $x^{e^2}+x^e-x \neq 0$ for all $x \in \gf(3^m)^*$. It follows that $P(x)$ is a nonzero 
square in $\gf(3^m)$ for every $x \in \gf(3^m)^*$. Since $m$ is odd, $-1$ is a nonsquare in $\gf(3^m)$. 
Hence, the equality of (\ref{eqn-X5}) cannot be possible. This contradiction proves the desired conclusion 
of this lemma.    
\end{proof}

In order to prove Theorem \ref{thm-HKMcodes}, we have to do more preparations. We now define for each 
$a \in \gf(3)$ and each $b \in \gf(3^m)^*$, 
$$ 
N_{(b,a)} =|\{x \in \gf(3^m): \tr(x+x^\ell)=0 \mbox{ and } \tr(bx)=a \}|. 
$$ 
One can easily prove that 
$$ 
|D \cup (-D)|=3^{m-1}-1. 
$$
It then follows that 
\begin{eqnarray}\label{eqn-Nba}
N_{(b,1)}=N_{(b,2)} \mbox{ and } N_{(b,0)}+N_{(b,1)}+N_{(b,2)}=3^{m-1}.      
\end{eqnarray} 

To determine the weight distribution of the code of Theorem \ref{thm-HKMcodes}, we need to find out $N_{(b,0)}$ 
for each $b \in \gf(3^m)$. Note that $\ell=e^2-e+1$ is odd and $\gcd(3^m-1, e+1)=2$. We know that $N_{(b,0)}$ 
is equal to the number of solutions $x \in \gf(3^m)$ of the following set of equations 
\begin{eqnarray}\label{eqn-HKM2}
\left\{ \begin{array}{l}
\tr(y^{e+1}+y^2)=0, \\
\tr(by^{e+1})=0.  
\end{array}
\right. 
\end{eqnarray}

We are now ready to prove the following lemma. 

\begin{lemma} \label{lem-July221}
For each $b \in \gf(3^m)^*$, $N_{(b,0)}$ has the following three possible values: 
$$ 
3^{m-2}, \  3^{m-2} \pm 2 \times 3^{2(h-1)}. 
$$
\end{lemma}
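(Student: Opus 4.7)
The plan is to express $N_{(b,0)}$ as a character sum and decompose it into four contributions coming from quadratic forms over $\gf(3^m)$. Using the identity $\mathbf{1}[t = 0] = \frac{1}{3}\sum_{u \in \gf(3)} \epsilon_3^{ut}$ for $t \in \gf(3)$ twice, I would obtain
$$9\, N_{(b,0)} = \sum_{u,v \in \gf(3)} \sum_{y \in \gf(3^m)} \epsilon_3^{u\tr(y^{e+1} + y^2)\,+\,v\tr(by^{e+1})} = \sum_{u,v \in \gf(3)} S(u,v),$$
where $S(u,v) = \sum_{y \in \gf(3^m)} \epsilon_3^{\tr((u+vb)y^{e+1} + u y^2)}$ is the character sum of a quadratic form. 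The term $S(0,0)$ contributes $3^m$; pairing $(u,v)$ with $(-u,-v)$ gives $S(-u,-v) = \overline{S(u,v)}$, so the remaining eight terms collapse to
$$9\, N_{(b,0)} = 3^m + 2\,\re(S(1,0)) + 2\,\re(S(0,1)) + 2\,\re(S(1,1)) + 2\,\re(S(1,-1)).$$

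Next I would evaluate each real part via Lemma \ref{lem-ZhouDing} with $p=3$. That lemma implies that for a quadratic form $f$ of rank $r$ on $\gf(3^m)$, $\re\bigl(\sum_y \epsilon_3^{f(y)}\bigr)$ equals $\pm 3^{m-r/2}$ when $r$ is even and $0$ when $r$ is odd. The forms behind $S(1,0)$ and $S(0,1)$ are $Q_1(y) = \tr(y^{e+1}+y^2)$ and $\tr(by^{e+1})$, both of which have rank $m=3h$ (odd) by Lemmas \ref{lem-rankQ1} and \ref{lem-LF081}. Hence $\re(S(1,0)) = \re(S(0,1)) = 0$. The forms behind $S(1,1)$ and $S(1,-1)$ are $Q_{1+b}$ and $Q_{1-b}$, whose ranks lie in $\{3h, 2h, h\}$ by Lemma \ref{lem-LF08}. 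Since $3h$ and $h$ are odd while $2h$ is even, each of $\re(S(1,\pm 1))$ is either $0$ or $\pm 3^{m-h} = \pm 3^{2h}$.

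Finally I would invoke Lemma \ref{lem-DingCS} with its parameter chosen as $u = 1+b$: since $-1 - (1+b) = 1-b$ in characteristic three, at least one of $Q_{1+b}$ and $Q_{1-b}$ has rank $3h$. Consequently at most one of $\re(S(1,1))$ and $\re(S(1,-1))$ is nonzero, and so $9\,N_{(b,0)} = 3^m + \delta$ with $\delta \in \{0,\,\pm 2 \cdot 3^{2h}\}$. Dividing by $9$ yields $N_{(b,0)} \in \{3^{m-2},\,3^{m-2} \pm 2 \cdot 3^{2(h-1)}\}$, as claimed. The main obstacle, namely ruling out a fourth value $3^{m-2} \pm 4 \cdot 3^{2(h-1)}$ that would appear if both $S(1,1)$ and $S(1,-1)$ contributed $\pm 3^{2h}$ simultaneously, is exactly what Lemma \ref{lem-DingCS} resolves; the remainder of the argument is routine character-sum bookkeeping together with parity checking of the three possible ranks of $Q_u$.
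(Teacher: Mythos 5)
Your proposal is correct and follows essentially the same route as the paper: the same orthogonality expansion of $9N_{(b,0)}$ into character sums of the quadratic forms $Q_1$, $\tr(by^{e+1})$, $Q_{1+b}$ and $Q_{1-b}$, the same use of Lemmas \ref{lem-rankQ1}, \ref{lem-LF081}, \ref{lem-LF08} and \ref{lem-ZhouDing} to kill the odd-rank contributions, and the same appeal to Lemma \ref{lem-DingCS} (with $-1-(1+b)=1-b$ in characteristic three) to rule out both $Q_{1\pm b}$ contributing simultaneously. The only cosmetic difference is that you phrase the conjugate-pair collapse via real parts, whereas the paper keeps the paired sums $\epsilon_3^{\pm\tr(\cdot)}$ explicit.
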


\begin{proof}
The number of solutions $y \in \gf(3^m)$ of (\ref{eqn-HKM2}) is equal to $N_{(b,0)}$ and is given by 
\begin{eqnarray}\label{eqn-8terms}
N_{(b,0)} &=& \frac{1}{9} \sum_{y \in \gf(3^m)} 
                        \left(\sum_{z_1 \in \gf(3)} \epsilon_3^{z_1\tr(y^{e+1} + y^2)} \right) 
                        \left(\sum_{z_2 \in \gf(3)} \epsilon_3^{z_2\tr(by^{e+1})} \right) \nonumber \\
&=&  3^{m-2} + \frac{1}{9} \left(\sum_{y \in \gf(3^m)}  \epsilon_3^{\tr(y^{e+1} + y^2)} + 
                         \sum_{y \in \gf(3^m)}  \epsilon_3^{-\tr(y^{e+1} + y^2)}  \right) + \nonumber \\   
& & \frac{1}{9} \left(\sum_{y \in \gf(3^m)}  \epsilon_3^{\tr(by^{e+1})} + 
                         \sum_{y \in \gf(3^m)}  \epsilon_3^{-\tr(by^{e+1})}  \right) + \nonumber \\ 
& & \frac{1}{9} \left(\sum_{y \in \gf(3^m)}  \epsilon_3^{\tr((1+b)y^{e+1} + y^2)} + 
                         \sum_{y \in \gf(3^m)}  \epsilon_3^{-\tr((1+b)y^{e+1} + y^2)}  \right) + \nonumber \\                          
& & \frac{1}{9} \left(\sum_{y \in \gf(3^m)}  \epsilon_3^{\tr((1-b)y^{e+1} + y^2)} + 
                         \sum_{y \in \gf(3^m)}  \epsilon_3^{-\tr((1-b)y^{e+1} + y^2)}  \right).                                             
\end{eqnarray} 

By Lemmas \ref{lem-rankQ1} and \ref{lem-ZhouDing}, we have 
$$ 
\frac{1}{9} \left(\sum_{y \in \gf(3^m)}  \epsilon_3^{\tr(y^{e+1} + y^2)} + 
                         \sum_{y \in \gf(3^m)}  \epsilon_3^{-\tr(y^{e+1} + y^2)}  \right)=0 
$$

It follows from Lemmas \ref{lem-LF081} and \ref{lem-ZhouDing} that  
$$ 
\frac{1}{9} \left(\sum_{y \in \gf(3^m)}  \epsilon_3^{\tr(by^{e+1})} + 
                         \sum_{y \in \gf(3^m)}  \epsilon_3^{-\tr(by^{e+1})}  \right)=0. 
$$

Combining Lemmas \ref{lem-DingCS} and \ref{lem-ZhouDing}, we know that at least one of 
the last two sums of the form $\frac{1}{9}(\ldots + \ldots)$ in (\ref{eqn-8terms}) is equal to $0$. 

The desired conclusion then follows from (\ref{eqn-8terms}) and Lemma \ref{lem-ZhouDing}. 
This completes the proof. 
\end{proof}

The following lemma follows from Lemma \ref{lem-July221} and (\ref{eqn-Nba}). 

\begin{lemma}\label{lem-DDD} 
The triple $(N_{(b,0)}, N_{(b,1)}, N_{(b,2)})$ takes on only the following three possible values: 
\begin{eqnarray*}
&& (3^{m-2}, \ 3^{m-2}, \ 3^{m-2}), \\  
&& (3^{m-2}+2\times 3^{2(h-1)}, \ 3^{m-2}- 3^{2(h-1)}, \ 3^{m-2}-3^{2(h-1)}), \\ 
&& (3^{m-2}-2\times 3^{2(h-1)}, \ 3^{m-2}+ 3^{2(h-1)}, \ 3^{m-2}+3^{2(h-1)}).  
\end{eqnarray*}
\end{lemma}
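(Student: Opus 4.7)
The plan is to reduce the statement to a one-line arithmetic consequence of Lemma \ref{lem-July221} together with the two identities recorded in (\ref{eqn-Nba}). From the relation $N_{(b,1)}=N_{(b,2)}$ and the total count $N_{(b,0)}+N_{(b,1)}+N_{(b,2)}=3^{m-1}$, the pair $(N_{(b,1)},N_{(b,2)})$ is determined by $N_{(b,0)}$ via
$$
N_{(b,1)}=N_{(b,2)}=\frac{3^{m-1}-N_{(b,0)}}{2}.
$$
So the only work is to substitute each of the three admissible values of $N_{(b,0)}$ supplied by Lemma \ref{lem-July221}.

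In the first case $N_{(b,0)}=3^{m-2}$, using $3^{m-1}-3^{m-2}=2\cdot 3^{m-2}$ one obtains $N_{(b,1)}=N_{(b,2)}=3^{m-2}$, which is the first triple. In the remaining cases $N_{(b,0)}=3^{m-2}\pm 2\cdot 3^{2(h-1)}$, the numerator in the displayed formula becomes $2\cdot 3^{m-2}\mp 2\cdot 3^{2(h-1)}$, so $N_{(b,1)}=N_{(b,2)}=3^{m-2}\mp 3^{2(h-1)}$. These exactly match the second and third triples listed in the statement.

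There is no real obstacle to overcome: the lemma is a pure bookkeeping corollary, and all the nontrivial content (the three-valuedness of $N_{(b,0)}$, and the equality $N_{(b,1)}=N_{(b,2)}$) has already been established in Lemma \ref{lem-July221} and (\ref{eqn-Nba}). The only sanity check worth performing is to verify that the numerators $2\cdot 3^{m-2}\mp 2\cdot 3^{2(h-1)}$ are indeed even so that the division by $2$ yields integers, which is immediate. The proof therefore reduces to a short computation which can be written out in a few lines.
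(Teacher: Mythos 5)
Your proposal is correct and follows exactly the route the paper intends: the paper simply asserts that Lemma \ref{lem-DDD} "follows from Lemma \ref{lem-July221} and (\ref{eqn-Nba})", and your computation of $N_{(b,1)}=N_{(b,2)}=(3^{m-1}-N_{(b,0)})/2$ followed by substitution of the three admissible values of $N_{(b,0)}$ is precisely the omitted bookkeeping.
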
 

The next lemma follows from Lemma \ref{lem-DDD} and the definition of $N_{(b,a)}$. 

\begin{lemma}\label{lem-chi1} 
For any $b \in \gf(3^m)^*$, $\chi_1(bD_0)$ takes on only one of the three values: 
$$ 
-1, \ 3^{2h-1}-1,   \ -3^{2h-1}-1,  
$$
where $D_0=D \cup (-D)$. 
\end{lemma}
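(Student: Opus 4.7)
The proof is a direct computation that reduces $\chi_1(bD_0)$ to a linear function of $N_{(b,0)}$, after which Lemma \ref{lem-DDD} makes the three cases immediate. The plan is to partition $D_0$ according to the value of $\tr(bx)$, expand the resulting character sum using cube roots of unity, exploit $N_{(b,1)}=N_{(b,2)}$ from (\ref{eqn-Nba}), and then substitute.

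First I would observe that $D_0$ is exactly the set of \emph{nonzero} $x \in \gf(3^m)$ satisfying $\tr(x+x^\ell)=0$: since $\ell=e^2-e+1$ is odd, $\tr((-y)+(-y)^\ell)=-\tr(y+y^\ell)$ shows that $-D$ lies in $f^{-1}(0)$ alongside $D$, and $D\cup(-D)\subseteq\gf(3^m)^*$ with $|D_0|=3^{m-1}-1$. Since $0\in f^{-1}(0)$ and $\tr(b\cdot 0)=0$, the element $x=0$ contributes to $N_{(b,0)}$ but not to $D_0$. Consequently, the number of $x\in D_0$ with $\tr(bx)=0$ is $N_{(b,0)}-1$, while for $a\in\{1,2\}$ it is $N_{(b,a)}$.

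Next, writing $\zeta=e^{2\pi i/3}$, so that $\chi_1(bx)=\zeta^{\tr(bx)}$, I would expand
$$
\chi_1(bD_0) = (N_{(b,0)}-1) + \zeta\,N_{(b,1)} + \zeta^2\,N_{(b,2)}.
$$
Using $N_{(b,1)}=N_{(b,2)}$ from (\ref{eqn-Nba}) together with $\zeta+\zeta^2=-1$, this collapses to $\chi_1(bD_0)=N_{(b,0)}-1-N_{(b,1)}$. The second identity of (\ref{eqn-Nba}) then gives $N_{(b,1)}=(3^{m-1}-N_{(b,0)})/2$, hence
$$
\chi_1(bD_0) = \frac{3N_{(b,0)}-3^{m-1}}{2} - 1.
$$

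Finally, I would substitute the three possible values of $N_{(b,0)}$ from Lemma \ref{lem-DDD}, namely $3^{m-2}$ and $3^{m-2}\pm 2\cdot 3^{2(h-1)}$ with $m=3h$. These give respectively $-1$, $3^{2h-1}-1$, and $-3^{2h-1}-1$, which is the claim. There is no real obstacle here since Lemma \ref{lem-DDD} carries the combinatorial burden; the only points requiring care are the bookkeeping for the origin (which inflates $N_{(b,0)}$ by one relative to $|\{x\in D_0:\tr(bx)=0\}|$) and the collapse of the two nontrivial terms via $\zeta+\zeta^2=-1$ into a single count $N_{(b,1)}$.
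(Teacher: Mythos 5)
Your proof is correct and follows exactly the route the paper intends: the paper simply states that the lemma ``follows from Lemma \ref{lem-DDD} and the definition of $N_{(b,a)}$,'' and your computation (partitioning $D_0$ by the value of $\tr(bx)$, accounting for the excluded origin, collapsing via $\zeta+\zeta^2=-1$ and $N_{(b,1)}=N_{(b,2)}$, then substituting the three values of $N_{(b,0)}$) is precisely the omitted verification. The bookkeeping for $x=0$ and the resulting identity $\chi_1(bD_0)=\tfrac{3N_{(b,0)}-3^{m-1}}{2}-1$ are both correct.
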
  

Finally, we are ready to prove Theorem \ref{thm-HKMcodes}. 

\subsection*{{\bf Proof of Theorem \ref{thm-HKMcodes}:}} 

For $x \in \gf(3^m)^*$, it follows from (\ref{eqn-weight}) and Lemma \ref{lem-chi1} that the codeword 
$\bc_x$ in (\ref{eqn-mcodeword}) has the following three weights: 
$$ 
w_1:=3^{3h-2}+3^{2h-2}, \ w_2:=3^{3h-2}, \ w_3:=3^{3h-2}-3^{2h-2}. 
$$
It is easy to prove that the dual code $\C_D^\perp$ has minimum weight at least $3$. 
We now determine the number $A_{w_i}$ of codewords with weight $w_i$ in $\C_{D}$. 
The first three Pless Power Moments \cite[p.260]{HP} lead to the following system of equations:  
\begin{eqnarray}\label{eqn-wtdsemibentfcode66}
\left\{ 
\begin{array}{lll}
A_{w_1}+A_{w_2}+A_{w_3} &=& 3^{3h}-1, \\
w_1A_{w_1}+w_2A_{w_2}+w_3A_{w_3} &=& 3^{3h-1}(3^{3h-1}-1), \\ 
w_1^2A_{w_1}+w_2^2A_{w_2}+w_3^2A_{w_3} &=& 3^{6h-3}(3^{3h-1}-1).  
\end{array}
\right. 
\end{eqnarray} 
Solving this set of three equations proves the weight distribution in Table \ref{tab-HKMcodes}.  

\begin{example} 
Let $h=1$. Then the code $\C_D$ of Theorem \ref{thm-HDMds} is a $[4, 3, 2]$ ternary code with weight 
enumerator $1+12z^{2}+8z^{3}+6z^{4}$ according to Magma, which confirms the result of 
Theorem \ref{thm-HKMcodes}. 
\end{example} 

\begin{example} 
Let $h=3$. Then the code $\C_D$ of Theorem \ref{thm-HDMds} is a $[3280, 9, 2106]$ ternary code with weight 
enumerator $1+756z^{2106}+18224z^{2187}+702z^{2268}$ according to Magma, which confirms the result of 
Theorem \ref{thm-HKMcodes}. 
\end{example}

\section{Concluding remarks} 

Although the idea of constructing linear codes in this paper is simple, one-weight codes, two-weight codes, 
and three-weight codes are constructed with those 2-designs. The codes are interesting, as 
one-weight codes, two-weight codes and three-weight codes have applications in secret sharing 
\cite{ADHK} and authentication codes \cite{CX05}. There is a survey on two-weight codes \cite{CK85}. 
Some interesting two-weight and three-weight codes were presented in \cite{CG84}, \cite{CW84}, 
\cite{Choi}, \cite{FL07}, \cite{LiYueLi}, \cite{LiYueLi2},  \cite{Xia}, and \cite{ZD13}. 

There are many other types of difference sets in $(\gf(p^m), \,+)$ and $(\gf(p^m)^*, \,\times)$ \cite{DingDScodes}. 
which give automatically linear codes within the framework of the construction of this paper. But it may be difficult to 
determine the parameters of these codes. The reader is cordially invited to attack this problem.   

Cyclic difference sets were employed to construct constant-weight codes in \cite{LZH}. A cyclic code approach 
to bent functions over $\gf(2)$ and $\Z_4$ is given in \cite{Wolf}. Almost perfect nonlinear functions and almost 
bent functions are employed to construct linear codes in \cite{CCZ}. Kerdock codes are also related to bent 
functions \cite{Carlet}. Some three-weight binary codes are also presented in \cite{MS77}[Theorems 33 and 34]. 
A related construction of linear codes is presented in \cite{MesagerDCC}.  In some of these references, a linear 
code over $\gf(q)$ is constructed with a highly nonlinear function $f$ from $\gf(q)$ to $\gf(q)$ and is defined by 
$$ 
\C(f)=\{ \bc=(\tr(af(x)+bx)_{x \in \gf(q)^*}: a \in \gf(q), \ b \in \gf(q)\}. 
$$
Its length is $q-1$, and its dimension is usually $2m$. The dual of $\C(f)$ has usually dimension $q-1-2m$. This 
approach gives a coding-theory characterisation of APN monomials, almost bent functions, and semibent functions (see, for 
examples, \cite{CCZ}, \cite{CCD00} and \cite{HX01}).  Theorem \ref{thm-BooleanCodes} of this papers gives 
automatically another coding-theory characterisation of APN monomials, almost bent functions, and semibent functions, 
though it looks simple and easy to derive.    

The construction of linear codes with 2-designs in this paper is different from all these constructions due to the 
difference in the dimension of the codes. The codes dealt with in this paper have dimension usually $m$, and 
length $n$, which is smaller than $q-1$ and may not divide $q-1$. 
 
It is obvious that the construction of linear codes 
of this paper is different from the classical one employing the incidence matrix of a design due to the difference in the length 
of the codes. 

Any linear code over $\gf(p)$ can be employed to construct secret sharing schemes \cite{ADHK,CDY05,YD06}. In order to 
obtain secret sharing schemes with interesting access structures, we would like to have linear codes $\C$ such that 
$w_{\min}/w_{\max} > \frac{p-1}{p}$ \cite{YD06}, where $w_{\min}$ and $w_{\max}$ denote the minimum and maximum 
nonzero weight of the linear code. 

The one-weight code of Theorem \ref{thm-part1} can be employed to construct secret sharing schemes using the approach 
of \cite{ADHK}. For the two-weight and three-weight codes over $\gf(p)$ obtained in this paper, we have 
$
\frac{w_{\min}}{w_{\max}} > \frac{p-1}{p}, 
$ 
provided that $m$ is large enough. Therefore, almost all the codes of this paper can be employed to construct secret sharing schemes 
with certain interesting access structures \cite{YD06}.

\section*{Acknowledgements}

The author would like to thank the reviewers and the Associate Editor, Dr. Sihem Mesnager, for their constructive comments that much 
improved the presentation and quality of this paper.

\end{document}